\newtheorem{thm}{Theorem}
\newtheorem{cor}[thm]{Corollary}
\newtheorem{lem}[thm]{Lemma}
\newtheorem{prop}[thm]{Proposition}
\theoremstyle{remark}
\newtheorem{rem}{Remark}
\theoremstyle{definition}
\newtheorem{defn}{Definition}
\theoremstyle{definition}
\newtheorem{ass}{Assumption}
\newcommand{\ep}{\epsilon}
\newcommand{\E}{\mathbb{E}}
\newcommand{\R}{\mathbb{R}}
\newcommand{\hide}[1]{}
\begin{document}
\title{Model-based clock synchronization protocol}

\author{Nikolaos~M.~Freris,~\IEEEmembership{Member,~IEEE,}
        Vivek~S.~Borkar,~\IEEEmembership{Fellow,~IEEE,}
        and~P.~R.~Kumar~\IEEEmembership{Fellow,~IEEE}%
\thanks{N. Freris is with the School of Computer and Communication Sciences, \'{E}cole Polytechnique F\'{e}d\'{e}rale de Lausanne, EPFL, Station 14, IC/LCAV, BC 322, Lausanne CH-1015, Switzerland, E-mail: {\tt nikolaos.freris@epfl.ch}}%
\thanks{V. Borkar is with the Department of Electrical Engineering, Indian Institute of Technology Bombay, Mumbai 400076, India, 
Email: {\tt borkar.vs@gmail.com}}%
\thanks{P. R. Kumar is with the Department of Electrical and Computer Engineering at Texas A\&M University, Room 331E, Wisenbaker Engineering Research Center,  College Station TX 77843-3259, USA. E-mail: {\tt prk@tamu.edu}.}
\thanks{This work was submitted to IEEE/ACM Transactions on Networking--Nov. 2013.}
}
\markboth{TECHNICAL REPORT}
{Freris et al.: Model-based clock synchronization protocol}

\maketitle

\begin{abstract}
In a network of communicating motes, each one equipped with a distinct clock, a given node is taken as reference and is associated with the time evolution $t$. We introduce and analyze a stochastic model for clocks in which the relative speedup of a clock with respect to the reference node, called the \emph{skew}, is characterized by some parametrized stochastic process. We study the problem of synchronizing clocks in a network which amounts to estimating the instantaneous relative skews and relative \emph{offsets}, i.e.,  the differences in the clock readouts, by asynchronously communicating time-stamped packets between neighboring nodes.

For a given communication link, we develop an algorithm for filtering the time-stamps exchanged between the two nodes in order to obtain Maximum-Likelihood (ML) estimates of the logarithm of the relative skew of one clock with respect to the other. We highlight implementation issues of the optimal filter and further present a scheme for pairwise offset estimation based on the obtained relative skew estimates. We study the properties of the proposed algorithms and provide theoretical guarantees on their performance.

We extend the analysis to the problem of network-wide synchronization where the goal is to obtain, for each node and any given time instant, estimates of its skew/offset with respect to the reference time, as well as estimates of its relative skew/offset with respect to any other clock in the network.  We leverage Kalman-Bucy filtering to obtain ML estimates of the logarithm of the skews at any given time; this gives rise to an online asynchronous algorithm for optimal filtering of the time-stamps in the entire network, which is however \emph{centralized}. To tackle this, we propose an efficient \emph{distributed} suboptimal online estimator for network-wide synchronization of skews and offsets, and study its performance both analytically and experimentally. We summarize our findings into defining a new distributed model-based clock synchronization protocol (MBCSP), and present a comparative simulation study of its accuracy versus prior art to showcase improvements.
\end{abstract}

\begin{IEEEkeywords}
Clock Synchronization, Clock Modelling, Sensor Networks, 
Distributed Algorithms, Stochastic Differential Equations, Random Delays, Networked Control
\end{IEEEkeywords}
\section{Introduction}

In a large network of motes, different agents need to take actions in order for the network to perform a collaborative task as a whole. The typical paradigm is that of a \emph{wireless sensor network} (WSN)~\cite{hemant_nick} in which nodes possessing sensing, wireless communication, and computation capabilities are deployed over a large area; the sensors communicate with their neighbors wirelessly, and cooperate with each other in processing the data. Applications are ubiquitous including a) \emph{military} e.g. monitoring forces, battlefield surveillance, targeting, damage assessment and attack reconnaissance, b) \emph{environmental} such as habitat monitoring, fire detection, traffic control, seismography and agriculture research, c) \emph{health} for instance monitoring patients and controlling the drug use and d) \emph{smart homes} where sensors can help automate decisions about temperature and lighting control as well as for surveillance and safety purposes.

In such setup, centralized coordination is not an option for various reasons: first, because centralized computations typically require multi-hop communication across an identified spanning tree, which in turn implies that nodes in the proximity of a central processor will be especially burdened. More importantly, because nodes have limited energy budget such approach would lead to reduced network life-time. Last but not least, centralized computations are not robust to dynamic changes in network topology resulting from temporary link or node failures due to various reasons including mobility, fading or time-outs due to congestion. In the absence of centralized coordination, it becomes vital to achieve network-wide time synchronization among different agents.

Different clocks generally don't agree and this creates issues in determining the notion of time in a system-wide consistent fashion. The clocks in a sensor network are typically inconsistent because of frequency drifts in oscillators due to environmental conditions, such as temperature, pressure or battery voltage changes~\cite{nfrer_TAC}. In addition, some events on specific sensors may affect the clock evolution; for example, a Berkeley mote sensor may miss clock interrupts when busy handling message transmission or sensing tasks~\cite{avg_consensus}.

Many applications have stringent clock synchronization requirements. Closing control loops or coordinating events in a decentralized system~\cite{convergence}, tracking, surveillance, target localization~\cite{plarre}, data fusion, and scheduled operations like power-efficient duty-cycling are grounded in accurate time synchronization. Slotted communication protocols such as slotted ALOHA, TDMA scheduling or random-access MAC~\cite{robert} require accurate synchronization in order to avoid unnecessary resource wastage. As we head towards the era of event-cum-time driven systems featuring the convergence of computation and communication with control, the need for well-synchronized clocks becomes increasingly important, affecting all quality-of-service (QoS), system performance or even safety, since un-coordinated actions can result in destabilizing critical networked control systems.

A commonly asked question about clock synchronization in sensornets is ``why not use GPS?'' Global Positioning System (GPS) has been a widespread means of synchronizing to a universal time standard. However, such centralized synchronization scheme is unapplicable in sensor network applications for various reasons. First, GPS requires a clear sky view so it does not work inside buildings, underwater, beneath dense foliage, in densely populated downtown areas with tall buildings, or even during solar flares~\cite{sn_survey}. Many sensornet motes, including the popular Berkeley mote~\cite{TPSN}, do not come equipped with a GPS receiver because of complexity and energy issues, cost and size factors~\cite{sn_survey}; the cost of a GPS receiver is multiple times that of the cheap off-the self sensors envisioned for deploying very large WSNs of reasonable cost, while its size is also larger than the mote itself. In addition, connecting to the GPS service is too energy-consuming, multiple times than that of performing computation or wireless communication, and it can drain a sensor's battery very fast. Finally, in adversarial environments, the GPS signals may not be trusted.

Given that centralized synchronization is not an alternative, the only option it to design and implement distributed clock synchronization algorithms for WSNs. The synchronization has stringent requirements in terms of high energy efficiency, scalability, precision (typically in the order of microseconds), robustness to node and link failures, responsiveness to clock drifts as well as low communication and computation complexity.

\subsection{Problem setup}

There is no universal definition of clock synchronization, so one needs to specify the \emph{goal} of a synchronization algorithm.
There are three main notions of clock synchronization that may be required by a particular application~\cite{nfrer_TAC}: first, there is
\emph{ordering of events}, where the goal is to create a consistent chronology of events in the entire network, while specification of exact time instants is not required. Then there is \emph{relative synchronization}, aiming at estimating the time display differences among a set of clocks in the network; such information can be used to translate time-stamps from one clock to the units of any other clock. Finally there is \emph{absolute synchronization}, seeking to set clock displays to agreement.

Another classification of synchronization algorithms is by \emph{scope}. \emph{Local} synchronization refers to the problem of synchronizing a node with only a subset of other nodes, typically those in its physical proximity; this is sufficient for several applications like monitoring, tracking and surveillance. For applications where the target is to achieve network-wide cooperation e.g., duty-cycling, slotted protocols or network control, it is necessary to attain \emph{global} synchronization which involves \emph{all} nodes in the network.
In this paper, we address the most general problem of \emph{absolute global} clock synchronization where each node calculates estimates of its skew/offset with respect to the reference clock, and can further compute estimates of its relative skew/ estimate w.r.t. any other node in the network, let alone any of its neighbors.


To achieve network clock synchronization, nodes are allowed to exchange time-stamped packets with their neighbors, i.e., with all nodes within their communication range. Such packet exchanges undergo a random communication delay that is unknown and time-varying. Here by ``delay'' we do not just mean the electromagnetic propagation delay, but rather the total elapsed time between the time that a sending node time-stamps a packet until the time that a receiver decodes it~\cite{nfrer_TAC}. At the transmitter side, this typically includes the time for the operating system to process the packet after time-stamping it, followed by the time for the packet to make its way through the remainder of the communication protocol stack. Then we have electromagnetic propagation delay which is totally determinable by the distance between sender and receiver; this is in the order of a few nanoseconds for the typical case when nodes are a few meters apart. At the receiver end, there is the time it takes the packet to travel through the protocol stack plus the time taken to consult the local clock on the receipt time. Because of the randomness at the transmitter/sender side due to e.g., operating system interrupts and congestion at the MAC layer, delays are random and asymmetric in the two directions of a communication link~\cite{nfrer_TAC}. Furthermore, the trasnmitter/sender delays dominate the propagation delay which is also the only deterministic component of the total delay incurred to a packet.

The ultimate goal is to develop a distributed asynchronous algorithm for the optimal filtering of time-stamps with provable performance.
In the sequel, we show that \emph{delay estimation} can be viewed a natural bi-product of clock synchronization. Estimating communication delays has many applications in its own right e.g., in predicting receipt times for network control and scheduling operations.

\subsection{Main Results}

The specific contribution that we outline in this paper is the introduction of a mathematical model for clocks and its use for developing a new clock synchronization protocol. 
This paper expands and extends preliminary results that appeared in~\cite{nfrer_algocdc}.

In a network of $n+1$ nodes, each equipped with its own clock, we develop a mathematical model for the time displays of different clocks and analyze its properties. We denote the display of node $i$'s clock at time $t$ by $\tau_i(t)$; with no loss in generality, we assume that the time evolution $t$ corresponds to node $0$'s clock, i.e., we define node 0 as the reference. We call the relative instantaneous speedup of clock $i$ at reference time $t$ with respect to the reference clock as its ``\emph{skew},'' and denote it by $a_i(t)$. The ``\emph{offset}" of clock $i$ at time $t$ is defined as the instantaneous difference of its display with respect to the reference time $t$, i.e., $\tau_i(t) - t$. For a given clock $i$, we model its skew $a_i(t)$ as a stochastic process, given by an exponential transformation of an Ornstein-Uhlenbeck process. Its time display, $\tau_i(t)$, is given by the integral of the skew $a_i(s)$ in the interval $[0,t]$. For each node, the instantaneous skew has expected value 1 at all times, while its variance is bounded by a constant; this is in alliance with  experimental observations showcasing that the skew of real clocks tends to fluctuate from the nominal value in a rather controlled manner~\cite{veitch}. Furthermore, the time display is \emph{unbiased}, i.e., $\E[\tau_i(t)] = t$ but, nonetheless, has unbounded variance which grows with time. This unbounded variance corresponds to the physical accumulation of skew errors with time
~\cite{veitch} and makes the synchronization problem challenging.
We show that if a different clock is taken as reference, the time displays of all other clocks can be expressed with respect to it using the same model, with different parameters and a change of time scales. We also calculate the Allan variance \cite{allan} of the model, which is a widely used metric for skew stability.

Under an assumption of sufficiently frequent packet exchanges and slowly-varying delays, we show how to obtain noisy relative skew measurements from one-way exchange of time-stamps in a communication link. We proceed to study the problem of pairwise synchronization, i.e., minimum-variance unbiased estimation of the logarithm of the relative skew, as an instance of linear filtering~\cite{jaswinski}. We further establish properties of and bounds on the error variance.
We leverage the analysis for pairwise clock synchronization to network clock synchronization. In such a case, the differential equations that are derived for filtering are not readily implementable since they require integration with respect to the unknown reference time. To tackle this, we propose an approximation which still leads to a bounded variance filter.

We propose and analyze three schemes for the network-wide estimation of the skews. First, we consider an off-line algorithm for the filtering of pairwise estimates. We show that using the distributed scheme of~\cite{arvind,barooah_conf} for the smoothing of pairwise estimates is optimal for the network-wide estimation for a particular selection of error criterion. 
Second, we show that the optimal linear filtering equations for the network case yield an asynchronous scheme which is stable and of bounded variance, but in general requires centralized computations. Last and more importantly, we derive a suboptimal asynchronous scheme which is readily implementable in a decentralized fashion.

We  address protocol implications and present a scheme to estimate relative offset estimates based on estimates of the relative skews.
We summarize our findings into defining \emph{model-based clock synchronization protocol} (MBCSP), and further present a comparative simulation study with both synthetic and real data in order to showcase improvement over previous art.

\section{Related Literature}

The simplest yet most commonly adopted model for clocks is the so called \emph{affine} model. In this model, the skew is assumed \emph{constant}, at least for the duration of a synchronization period; in practice, this implies that synchronization has to be carried out at faster time-scale than the rate of clock drifts. The display of the $i-$th clock satisfies $\tau_i(t) := a_i(t-t_0) + b_i,$ where $t$ denotes the reference time evolution, $a_i > 0$ is the skew of clock $i$, and $b_i \in \R$ is the offset of the clock at reference time $t_0$. A complete analysis of the feasibility of the problem for the case of \emph{affine} clocks was carried out in~\cite{nfrer_TAC}. It was shown that clock synchronization is \emph{impossible} even for the best case scenario of a complete graph and constant (but unknown) link delays, and the uncertainty set was fully characterized. 
The majority of existing synchronization protocols considers constant skew for fixed time periods, and accounts for skew drifts by adaptively estimating $a_i$ for each given time interval.

For a comprehensive survey on clock synchronization with emphasis on sensor network applications we refer the reader to~\cite{sn_survey2,sn_survey,sync_overview}. In the sequel, we overview related work featuring popular synchronization protocols, as well as theoretical results on the topic.

In large networks where synchronization requirements are not too stringent, e.g. the Internet, the Network Time Protocol
(NTP~\cite{NTP}) has been successfully used for over two decades. NTP is a \emph{hierarchical} protocol used to synchronize with external sources (typically via GPS) organized in a hierarchy of levels, called \emph{strata}. It attains accuracy in the order of a few
milliseconds~\cite{NTP}. However, real-time applications in wireless sensor networks typically require precision in the order of a few microseconds, so NTP is no longer applicable in this setup.

For accurate synchronization in sensor networks a variety of algorithms have been suggested.
Reference Broadcast Synchronization (RBS~\cite{RBS}) is a \emph{receiver-receiver} synchronization algorithm which exploits
the broadcast nature of the wireless medium. In RBS, nodes broadcast packets without any time-stamping on the
transmitter side. The  nodes that receive the same transmitted packet record the reception times and exchange this information with
their neighbors so as to estimate their clock difference; an implicit assumption is that the one-way delays are the same for
neighboring nodes, which completely eliminates the transmitter-side non-determinism and accuracy depends
mainly on the difference of receiver processing delays. This scheme was tested in sensor networks comprising of
Berkeley motes using off-the-shelf 802.11 Ethernet, and achieved precision within 11 $\mu s$~\cite{RBS}.
The Timing-Sync Protocol for Sensor Networks (TPSN~\cite{TPSN}) is a popular \emph{sender-receiver} synchronization protocol which
uses time-stamping at the MAC layer to eliminate the transmission delay which is typically the most variable term in
wireless sensor networks. It was observed and verified by simulations~\cite{TPSN} that TPSN achieves approximately
twice the accuracy of RBS for pairwise synchronization. TPSN uses message-passing across a spanning tree to achieve network-wide synchronization. In~\cite{FTSP}, authors identify the sources that contribute to packet delivery delay and propose the Flooding Time Synchronization Protocol (FTSP~\cite{FTSP}). FTSP uses hardware solutions and an efficient time-stamping mechanism to effectively eliminate all packet delay factors but the propagation delay. Linear regression is used to compensate for clock drifts, and the achieved precision was measured 10 $\mu s$ on average, for a network with several hundreds of nodes. Moreover, FTSP was found to be highly robust to link failures, and dynamic readjustments of the reference node~\cite{FTSP}.

A widely studied problem is that of obtaining \emph{consistent} estimates for clock's offsets, given estimates of relative measurements between any two communicating nodes~\cite{karp,solis,arvind,smoothing,barooah_conf,barooah,barooah_scalinglaws,barooah_asymmetric,barooah_MJLS}; a comprehensive survey can be found in~\cite{barooah_survey}. The term ``consistent'' means that estimates should satisfy Kirchoff's voltage law, i.e., should sum to zero along the edges of a directed cycle. This constraint leads to \emph{smoothing} pairwise measurements and is also used for obtaining nodal skew estimates by considering relative measurements of the logarithm of clock's skew. Karp et al.~\cite{karp} derived the \emph{best linear unbiased estimate} (BLUE) of the pairwise offset differences between any two nodes $(i,j)$, where ``best'' refers to minimum-variance. They further established that the variance of the BLUE is equal to the effective \emph{resistance} between those two nodes in an electric network where the resistance of each link is equal to the error variance of the corresponding relative measurement. The connection of BLUEs with resistance networks has also been thoroughly studied in~\cite{arvind,smoothing,barooah_scalinglaws}; the authors therein studied the asymptotic variance for large networks and established sharp scaling laws for various classes of graphs. It follows  from the electrical analogy that there is a great merit in leveraging all relative measurements not just those along a spanning tree as in TPSN~\cite{TPSN}.  For example, for connected random planar networks the asymptotic variance of the BLUE between any two nodes is $O(1)$~\cite{smoothing}, while for a tree it is $O(\sqrt{n})$; this provides theoretical support for the feasibility of clock synchronization in large networks.

A large emphasis has been given on developing distributed asynchronous schemes for iteratively calculating the BLUEs by \emph{smoothing} relative measurements. One such scheme was developed in~\cite{karp} under the restrictive assumption of Gaussian measurement noise. An asynchronous version of the Jacobi algorithm, which essentially amounts to computing the pseudo-inverse of the weighted reduced Laplacian of the network graph in a decentralized fashion, was developed simultaneously and independently in~\cite{barooah_conf,solis}; The achieved average accuracy was measured to be 2$\mu s$ for pairwise synchronization, and $20 \mu s$ for a network of $40$ nodes, in which FTSP achieved an average accuracy of $30 \mu s$~\cite{solis}. In~\cite{barooah}, it was proposed to use information from the two-hop neighborhood of each node for deriving a distributed algorithm called \emph{Overlapping Subgraph Estimator}; the convergence to BLUE was established and it was argued that this scheme has a faster convergence speed, therefore leading to energy savings in a sensor network. Barooah et al.~\cite{barooah} also proposed an efficient method for initializing the nodal estimates, called \emph{flagged initialization}, which led to reducing the number of iterations (for a given convergence criterion) in all tested scenarios. Giridhar and Kumar~\cite{arvind,smoothing} obtained bounds on the convergence rate of the \emph{spatial smoothing} algorithm, i.e., the Jacobi algorithm applied to least-squares estimation. They made connections of the convergence rate with two graph indices, namely the degree of the root node and the \emph{edge-connectivity}, by using Cheeger's inequality. For example, for either lattices or random planar graphs, it was shown that the number of iterations required to attain a given accuracy is $O(n^2)$~\cite{smoothing}. An implicit assumption of the aforementioned approaches is that communications links are symmetric. When this assumption is violated, it was shown in~\cite{barooah_asymmetric} that the distributed Jacobi algorithm converges to a suboptimal unbiased estimate. The case where network topology is varying, e.g. due to mobility, was studied in~\cite{barooah_MJLS} where convergence was established using results from  Markov Jump Linear Systems. Distributed synchronization algorithms are directly related to consensus; algorithms based on average consensus have been proposed in~\cite{avg_consensus,dpll}, while~\cite{MTS} proposed a protocol based on maximum-value consensus. Last but not least, a class of randomized distributed smoothing algorithms with provable exponential convergence in the mean square appeared in~\cite{RKO}.



%

Further related literature treats the case of \emph{atomic} clocks.  A model for atomic clocks was proposed in~\cite{model_sde}, \cite{allan_sde}; The authors suggested linear stochastic differential equation systems where the state variables correspond to the phase and frequency deviations of the clock oscillator. As a metric for clock stability, an index of the quadratic variation of the clock frequency called \emph{Allan variance} was introduced in~\cite{allan} and was calculated for the particular clock model in~\cite{allan_sde}.

\section{Stochastic Clock Model} \label{clock_model}

In a network of $n+1$ clocks, we consider a reference time, denoted by the non-negative continuous variable $t\in \R_+$. Without any loss in generality, we assume that the reference corresponds to a given clock, say clock $0$. We use $\tau_i(t)$ to denote the display of the $i-$th clock at time
$t$.
To model such time displays $\tau_i(t)$, we define independent Ornstein-Uhlenbeck processes~\cite{cont_mart} $\{X_i(\cdot)\}_{i=1}^n$:
\begin{eqnarray}\label{state_eq}
dX_i(t) = -\alpha_iX_i(t)dt + \ep_idW_i(t) & ; & X_i(0) = 0, \label{OU}
\end{eqnarray}
where $\{W_i(\cdot)\}_{i=1}^n$ are independent scalar Brownian motions  and where $\alpha_i, \ep_i > 0$ are given constants.
Let $a_i(t)$ denote the \emph{skew} of the $i-$th node at time $t$, i.e., its relative speed with respect to the
reference clock. We model the skew as
\begin{eqnarray}\label{skew_eq}
a_i(t) := c_i(t)e^{X_i(t)}, & c_i(t):= e^{- \frac{1}{4}\frac{\ep_i^2}{\alpha_i}(1 - e^{-2\alpha_it})} 
\end{eqnarray}
Then the clock display $\tau_i(t)$ is given by
\begin{equation}\label{disp_eq}
\tau_i(t) := \int_0^t\!\!a_i(t')dt'; \ \tau_i(0) = 0.
\end{equation}

\begin{rem}
The model also applies for the reference clock $0$, by setting $\ep_0 = 0$.
We have assumed that all clocks start \emph{synchronized} in the sense that $\tau_i(0) = 0, a_i(0)=1$, and that synchronization is lost as an effect of time-varying skews; the extension to the general case is straightforward.
Note that $a_i(t) > 0 \ \forall t\ge0, \ i=0,1,\hdots,n$, which implies that for
all clocks the time displays $\tau_i(t)$ are strictly increasing functions of $t$, i.e., the time at each clock
evolves \emph{forward}.
\end{rem}

\begin{rem}[Stochastic differential equation for the skew] \label{skew_sde}
It follows as an application of It\^{o}'s formula \cite{jaswinski}
and the definition (\ref{skew_eq})
that 
the skew $a_i(t)$ satisfies the following Stochastic Differential Equation (SDE):
\begin{equation}\label{sde_a}
da_i(t) = (-\alpha_i \log a_i(t) + \frac{1}{4}\ep_i^2(3-e^{-2\alpha_i t}))a_i(t)dt + \ep_i a_i(t) dW_i(t).
\end{equation}
\end{rem}

\begin{rem}[Numerical simulation of OU process]\label{OU_sim}
The Ornstein-Uhlenbeck process $X_i(t)$ can be numerically approximated by a discrete process $\bar{X}_i(k)$ defined recursively by
\begin{equation}\label{numerical_OU}
\bar{X}_i((k+1)) = (1 - \alpha_i \Delta t) \bar{X}_i(k) + \ep_i \sqrt{\Delta t} w_i(k),
\end{equation}
where $\Delta t$ is the stepsize, $\bar{X}_(k)$ denotes the numerical approximation of $X_i(k\Delta t)$ and $\{w_i(k)\}$ represents a sequence of independent standard normal random variables. For the linear SDE under consideration (OU), this method corresponds to both Euler-Maruyama and Milstein's methods~\cite{numerical}. In particular the numerical approximation (\ref{numerical_OU}) has \emph{strong order of convergence} $\gamma = 1$, in the sense that for a fixed interval $[0,T]$ there exists $C(T)>0$ such that
\begin{equation}
\E|\bar{X}_i(n) - X_i(n\Delta t)| \le C(T)\Delta t^\gamma,
\end{equation}
for all $n=0,1,\hdots, \lfloor\frac{T}{\Delta t} \rfloor$.

However, for the OU case, an exact calculation is possible at discrete values as follows. From the analysis in appendix \ref{prop_app} it follows that for any $t\ge0$ we have
\begin{equation}
X_i(t) = \ep_i\int_0^t \! \! e^{-\alpha_i(t - s)}dW_i(s),
\end{equation}
whence for $t_2\ge t_1 >0$
\begin{equation}
X_i(t_2) = e^{-\alpha_i(t_2-t_1)} X_i(t_1) + \int_{t_1}^{t_2} \! \! e^{-\alpha_i(t_2 - s)}dW_i(s),
\end{equation}
in particular we can consider again a uniform sampling with step $\Delta t$ and get
\begin{equation}\label{numerical_OU_exact}
\bar{X}_i(k+1) =  e^{-\alpha_i\Delta t} \bar{X}_i(k) + \sqrt{\frac{1}{2\alpha_i}[1 - e^{-2\alpha_i\Delta t}]}v_i(k),
\end{equation}
where $v_i(k)$ is standard (mean zero and unit variance) Gaussian White Noise (GWN) sequence.
\end{rem}

\subsection{Model Properties} \label{prop}

In this section, we summarize the main properties of the proposed mathematical model. In particular, we show that, for each clock, the
instantaneous skew has mean value $1$ at all times, and bounded variance. However, the variance of the time display grows unbounded.
We further show how to express the time of one clock with respect to the time of another, which comes handy in the case that the reference
may be dynamically re-selected to account for changes in network topology such as broken links/nodes.

\begin{lem}[Properties of the stochastic model]\label{prop_lemma}
For each clock $i=1,2,\hdots,n$:
\begin{enumerate}
\item The skew satisfies $\E[a_i(t)] = 1 , \forall t\in \R_+$, and $\sup_{t\in\R_+} Var(a_i(t)) < +\infty$.\label{skew_prop}
\item The time display satisfies $\E[\tau_i(t)] = t , \lim_{t \to +\infty} Var(\tau_i(t)) = +\infty$; in particular, $Var(\tau_i(t)) = \Omega(t), \ Var(\tau_i(t)) = O(t^2)$. \label{time_prop}
\end{enumerate}
\end{lem}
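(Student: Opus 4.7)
The plan is to exploit the fact that $X_i(t)$ is a mean-zero Gaussian process, so $a_i(t)$ is essentially a normalized log-normal random variable, and the constant $c_i(t)$ has been engineered precisely to make the mean equal to 1. All mean, variance, and covariance computations then reduce to evaluating the Gaussian moment generating function and the OU covariance.

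For part~\ref{skew_prop}, I would first recall (from the representation $X_i(t)=\epsilon_i\int_0^t e^{-\alpha_i(t-s)}dW_i(s)$ noted in Remark~\ref{OU_sim}) that $X_i(t)\sim\mathcal{N}(0,\sigma_i^2(t))$ with $\sigma_i^2(t)=\tfrac{\epsilon_i^2}{2\alpha_i}(1-e^{-2\alpha_i t})$. Using $\E[e^{\lambda X_i(t)}]=e^{\lambda^2\sigma_i^2(t)/2}$ with $\lambda=1$ gives $\E[e^{X_i(t)}]=e^{\sigma_i^2(t)/2}=1/c_i(t)$, so $\E[a_i(t)]=1$. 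Setting $\lambda=2$ yields $\E[a_i(t)^2]=c_i(t)^2 e^{2\sigma_i^2(t)}=e^{\sigma_i^2(t)}$, whence $\mathrm{Var}(a_i(t))=e^{\sigma_i^2(t)}-1\le e^{\epsilon_i^2/(2\alpha_i)}-1$, establishing the uniform bound.

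For part~\ref{time_prop}, the identity $\E[\tau_i(t)]=t$ is immediate from Tonelli's theorem applied to the non-negative integrand $a_i(s)$. For the variance, I would use Fubini (the second moment of $a_i$ is bounded, so the integrand is integrable) to write
\begin{equation}
\mathrm{Var}(\tau_i(t)) = \int_0^t\!\!\int_0^t \mathrm{Cov}(a_i(s),a_i(u))\,ds\,du.
\end{equation}
The key computation is the covariance: since $X_i(s)+X_i(u)$ is Gaussian with variance $\sigma_i^2(s)+\sigma_i^2(u)+2K_i(s,u)$, where $K_i(s,u):=\mathrm{Cov}(X_i(s),X_i(u))=\tfrac{\epsilon_i^2}{2\alpha_i}(e^{-\alpha_i|s-u|}-e^{-\alpha_i(s+u)})$, the Gaussian MGF gives $\E[a_i(s)a_i(u)]=c_i(s)c_i(u)\,e^{(\sigma_i^2(s)+\sigma_i^2(u))/2+K_i(s,u)}=e^{K_i(s,u)}$, so
\begin{equation}
\mathrm{Cov}(a_i(s),a_i(u)) = \exp\!\Bigl(\tfrac{\epsilon_i^2}{2\alpha_i}(e^{-\alpha_i|s-u|}-e^{-\alpha_i(s+u)})\Bigr)-1 \ge 0.
\end{equation}

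The upper bound $\mathrm{Var}(\tau_i(t))=O(t^2)$ then follows directly from Cauchy--Schwarz, $\mathrm{Cov}(a_i(s),a_i(u))\le \sup_{r}\mathrm{Var}(a_i(r))<\infty$, integrated over $[0,t]^2$. For the lower bound $\Omega(t)$, I would apply the elementary inequality $e^x-1\ge x$ to the non-negative exponent and obtain
\begin{equation}
\mathrm{Var}(\tau_i(t)) \ge \tfrac{\epsilon_i^2}{2\alpha_i}\int_0^t\!\!\int_0^t \bigl(e^{-\alpha_i|s-u|}-e^{-\alpha_i(s+u)}\bigr)\,ds\,du.
\end{equation}
A direct evaluation shows the first double integral equals $\tfrac{2}{\alpha_i}\bigl(t-\tfrac{1-e^{-\alpha_i t}}{\alpha_i}\bigr)=\Theta(t)$, while the second is bounded by $1/\alpha_i^2=O(1)$, delivering both the $\Omega(t)$ rate and $\lim_{t\to\infty}\mathrm{Var}(\tau_i(t))=+\infty$. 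The main subtlety I anticipate is the bookkeeping in the covariance calculation, where the careful cancellation between $c_i(s)c_i(u)$ and the $\sigma_i^2$ terms in the exponent is what produces the clean expression that makes both bounds transparent.
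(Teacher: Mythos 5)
Your proposal is correct and takes essentially the same route as the paper: the explicit OU solution plus the Gaussian moment generating function give $\E[a_i(t)]=1$ and the uniform bound $\mathrm{Var}(a_i(t))\le e^{\ep_i^2/(2\alpha_i)}-1$, Tonelli gives $\E[\tau_i(t)]=t$, and your covariance identity $\E[a_i(s)a_i(u)]=e^{K_i(s,u)}$ with $K_i(s,u)=\tfrac{\ep_i^2}{2\alpha_i}(e^{-\alpha_i|s-u|}-e^{-\alpha_i(s+u)})$ is exactly the paper's computation, as is the $O(t^2)$ bound. The only cosmetic difference is the lower bound, where you apply $e^x-1\ge x$ directly to the nonnegative integrand while the paper first invokes Jensen's inequality to move the double integral into the exponent; both reduce to the same quantity $\int_0^t\!\int_0^t K_i(s,u)\,ds\,du=\Omega(t)$.
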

\begin{proof}
The proof can be found in Appendix \ref{prop_app}.
\end{proof}

\begin{rem}[Metric for clock's quality]
One may use either $\frac{\alpha_i}{\ep_i^2}$ or $\frac{\alpha_i}{\ep_i}$ as a metric for the ``\emph{quality}" of the $i$-th
clock. This is justified by noting that the asymptotic skew variance as well as the upper bound on the variance of the time display of clock $i$, is increasing in $e^\frac{\ep_i^2}{\alpha_i}$, while the lower bound on $Var(\tau_i(t))$ is increasing in $\frac{\alpha_i}{\ep_i}$, cf. (\ref{varskew_bound}), (\ref{vartime_bound}) in appendix \ref{prop_app}. In conclusion, the expected deviation of the clock from nominal values is decreasing in each metric.
\end{rem}
The following result characterizes the sample path behavior of the skew process $a_i(t)$.
\begin{thm}[Sample path properties of skew]
The skew $a_i(t)$ almost surely satisfies
\begin{eqnarray}
\overline{\lim}_{t\to +\infty} a_i(t) &=& \sup_{t\ge0} a_i(t) = +\infty\\
\underline{\lim}_{t\to +\infty} a_i(t) &=& \inf_{t\ge0} a_i(t) = 0.
\end{eqnarray}
\end{thm}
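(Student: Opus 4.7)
The plan is to reduce the four claims to a statement about the underlying Ornstein-Uhlenbeck process $X_i(\cdot)$ and then exploit a time-change representation. Recall $a_i(t) = c_i(t) e^{X_i(t)}$ where $c_i(t) = \exp\!\bigl(-\tfrac{\epsilon_i^2}{4\alpha_i}(1 - e^{-2\alpha_i t})\bigr)$ is a continuous, deterministic function taking values in the compact interval $[e^{-\epsilon_i^2/(4\alpha_i)},\,1] \subset (0,\infty)$. Hence $\limsup_{t\to\infty} a_i(t) = +\infty$ is equivalent to $\limsup_{t\to\infty} X_i(t) = +\infty$, and likewise $\liminf_{t\to\infty} a_i(t) = 0$ is equivalent to $\liminf_{t\to\infty} X_i(t) = -\infty$. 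The supremum/infimum claims follow from the $\limsup$/$\liminf$ claims together with $a_i(t) > 0$ and the trivial bounds $\sup_{t\ge 0} a_i(t) \ge \limsup_{t\to\infty} a_i(t)$ and $0 \le \inf_{t\ge 0} a_i(t) \le \liminf_{t\to\infty} a_i(t)$.

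Next I would use the explicit integral representation, derived earlier in Remark~\ref{OU_sim}, namely
\begin{equation*}
X_i(t) = \epsilon_i e^{-\alpha_i t} M_i(t), \qquad M_i(t) := \int_0^t e^{\alpha_i s}\, dW_i(s).
\end{equation*}
The process $M_i$ is a continuous martingale with quadratic variation $\langle M_i\rangle_t = \tfrac{1}{2\alpha_i}(e^{2\alpha_i t}-1)$, which tends almost surely to $+\infty$. By the Dambis--Dubins--Schwarz theorem there exists a standard Brownian motion $B$ (on a possibly enlarged probability space) such that $M_i(t) = B\bigl(\langle M_i\rangle_t\bigr)$. Setting $u = \langle M_i\rangle_t$, a short algebraic manipulation yields
\begin{equation*}
X_i(t) \;=\; \frac{\epsilon_i\, B(u)}{\sqrt{2\alpha_i u + 1}}.
\end{equation*}

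I would then invoke Khintchine's law of the iterated logarithm: almost surely,
\begin{equation*}
\limsup_{u\to\infty}\frac{B(u)}{\sqrt{2u\log\log u}} = 1, \qquad \liminf_{u\to\infty}\frac{B(u)}{\sqrt{2u\log\log u}} = -1.
\end{equation*}
Since $\sqrt{2u\log\log u}/\sqrt{2\alpha_i u + 1} \to +\infty$, this immediately gives $\limsup_{t\to\infty} X_i(t) = +\infty$ and $\liminf_{t\to\infty} X_i(t) = -\infty$ almost surely, which combined with the opening reduction yields all four claims.

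The main obstacle — really the only non-routine step — is justifying the time-change that converts the tail behaviour of the Gaussian OU process into that of a Brownian motion at infinity; once the representation $X_i(t) = \epsilon_i B(u)/\sqrt{2\alpha_i u + 1}$ is in hand, the rest is immediate from LIL and the boundedness of $c_i(\cdot)$. As an alternative to the Dambis--Dubins--Schwarz route, one could instead argue via ergodicity: the OU process has Gaussian invariant law $\mathcal{N}(0,\epsilon_i^2/(2\alpha_i))$ and is positive Harris recurrent, so for every $K>0$ the occupation time of $(K,\infty)$ is almost surely infinite, forcing $\limsup X_i(t) = +\infty$, and symmetrically for the $\liminf$. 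Either route closes the argument in a couple of lines.
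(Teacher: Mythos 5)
Your proof is correct and follows essentially the same route as the paper: reduce to the tail behaviour of $X_i$, represent $e^{\alpha_i t}X_i(t)$ as a time-changed Brownian motion via Dambis--Dubins--Schwarz, and conclude with the law of the iterated logarithm. You additionally spell out the (easy) reduction from $a_i$ to $X_i$ and the sup/inf claims, which the paper leaves implicit, but there is no substantive difference in approach.
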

\begin{proof}
It follows from (\ref{sol}) that $Y_i(t) := e^{\alpha_i t}X_i(t) = \ep_i\int_0^t \! \! e^{\alpha_i
s}dW_i(s)$ is a continuous local martingale \cite{cont_mart}, with quadratic variation $[Y_i,Y_i]_t = e^{2\alpha_it} -
1$, and $Y_i(0) = 0$. Hence the Dambis, Dubin-Schwartz theorem~\cite{cont_mart}, applies and yields that
$Y_i(t) = \tilde{W}_i([Y_i,Y_i]_t)$, where $\{\tilde{W}_i(t)\}$ is a standard Brownian motion; in particular, $X_i(t) = \ep_i e^{-\alpha_it}\tilde{W}_i(e^{2\alpha_it} - 1)$.
It follows from the law of iterated logarithms~\cite{stroock} which states that for Brownian motion $\{W(t)\}$ it holds that
$\overline{\lim}_{t\to +\infty} \frac{|W(t)|}{\sqrt{2t\log\log t}} = 1, a.s.$
that $\overline{\lim}_{t\to +\infty} X_i(t) = +\infty, \underline{\lim}_{t\to +\infty} X_i(t) = -\infty, a.s.$
\end{proof}

%

A particular realization of the clock display according to our model, as well as a plot of its variance are illustrated in Section~\ref{sec:simulations}.
\subsection{Time translation among different clocks}

We show how to translate time among different (non-reference) clocks $i$ and $j$,
i.e., how to express the time of one non-reference clock in the units of another clock.

\begin{prop}[Time translation]\label{translation_prop}
Assume that $\alpha_i=\alpha_j=\alpha$, and consider the time of the $j$-th clock  expressed with respect to the time of the
$i$-th clock; formally let $\tilde{\tau}_{ji} := \tau_j \circ \tau_i^{-1} $, i.e., $\tilde{\tau}_{ji}(\tau) :=
\tau_j(\tau_i^{-1}(\tau))$.  Then
\begin{enumerate}
\item $\tilde{\tau}_{ji}(\tau)$ satisfies the differential equation
    \begin{equation}\label{time_tran}
    d \tilde{\tau}_{ji}(\tau) = \frac{a_j(t)}{a_i(t)}d\tau,
    \end{equation}
    where $t = \tau_i^{-1}(\tau)$.

\item For $i,j \ne 0$, let $X_{ij}(t) := X_j(t) - X_i(t)$. Then $X_{ij}(t)$ satisfies the SDE
    \begin{equation}\label{OU2}
    dX_{ij}(t) = -\alpha X_{ij}(t)dt + \ep_{ij}dW_{ij}(t) \ ; \ X_{ij}(0) = 0,
    \end{equation}
    and the ``\emph{relative skew}'' is given by
    \begin{equation}\label{relative_skew}
    a_{ij}(t) := \frac{a_j(t)}{a_i(t)} = c_{ij}(t)e^{X_{ij}(t)},
    \end{equation}
    where $ c_{ij}(t):= \frac{c_j(t)}{c_i(t)} = c_{ij} e^{\frac{1}{4}(\frac{\ep_j^2}{\alpha} -
\frac{\ep_i^2}{\alpha})e^{-2\alpha t}}$, $c_{ij} := e^{-\frac{1}{4}(\frac{\ep_j^2}{\alpha} -
\frac{\ep_i^2}{\alpha})}$, $\ep_{ij} := (\ep_i^2 + \ep_j^2)^{\frac{1}{2}}$ \footnote{Note that in the case that $j=0$ we
should set $\ep_{i0} = -\ep_i$, but this would not alter the distribution of the solution of the SDE.}, and $W_{ij}(t)$
is a standard scalar Brownian motion dependent on $W_i(t), W_j(t)$.
\item Let $\tilde{X}_{ij}(\tau) := X_{ij}(\tau_i^{-1}(\tau))$. Then $\tilde{X}_{ij}(0) = 0$ and
    \begin{eqnarray}\label{state_eq2}
    d\tilde{X}_{ij}(\tau) = -\alpha \frac{1}{a_i(t)}\tilde{X}_{ij}(\tau)d\tau +
\ep_{ij}\frac{1}{\sqrt{a_i(t)}}dW_{ij}(\tau),
    \end{eqnarray}
    where $t:= \tau_i^{-1}(\tau_i)$.
\end{enumerate}
\end{prop}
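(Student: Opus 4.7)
My plan is to prove the three parts in order, exploiting the strict monotonicity of $\tau_i$, the linearity of the OU dynamics, and the Dambis--Dubin--Schwartz time change for the last item.

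For part 1, since $a_i(t) > 0$ for all $t \ge 0$, the map $t \mapsto \tau_i(t)$ is strictly increasing and continuously differentiable with $\tau_i'(t) = a_i(t)$. Hence $\tau_i^{-1}$ exists and, by the inverse function theorem, $(\tau_i^{-1})'(\tau) = 1/a_i(\tau_i^{-1}(\tau))$. Applying the chain rule to $\tilde{\tau}_{ji}(\tau) = \tau_j(\tau_i^{-1}(\tau))$ and substituting $t = \tau_i^{-1}(\tau)$ yields $\tilde{\tau}_{ji}'(\tau) = a_j(t)/a_i(t)$, which is exactly (\ref{time_tran}).

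For part 2, I would simply subtract the two SDEs in (\ref{OU}) for $X_j$ and $X_i$; because $\alpha_i = \alpha_j = \alpha$, the drift combines cleanly into $-\alpha X_{ij}(t)\,dt$. The martingale part is $\epsilon_j dW_j(t) - \epsilon_i dW_i(t)$, which is a continuous (local) martingale with quadratic variation $(\epsilon_i^2 + \epsilon_j^2)t = \epsilon_{ij}^2 t$. Defining $W_{ij}(t) := \epsilon_{ij}^{-1}(\epsilon_j W_j(t) - \epsilon_i W_i(t))$ gives a continuous martingale with quadratic variation $t$ and continuous paths, so L\'evy's characterization identifies it as a standard Brownian motion. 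This produces (\ref{OU2}). The relative skew formula then follows by direct computation: $a_{ij}(t) = a_j(t)/a_i(t) = (c_j(t)/c_i(t)) e^{X_j(t)-X_i(t)}$, and the product $c_j(t)/c_i(t)$ simplifies algebraically (by combining the exponents and splitting the constant part from the $e^{-2\alpha t}$ part) to give $c_{ij}(t)$ as stated.

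For part 3, the challenge is the random time change in the stochastic integral. Substituting $\tau = \tau_i(t)$, hence $dt = d\tau/a_i(t)$, handles the drift immediately: $-\alpha X_{ij}(t)\,dt = -\alpha \tilde{X}_{ij}(\tau) a_i(t)^{-1} d\tau$. For the martingale part, view $\tilde{M}(\tau) := \epsilon_{ij} W_{ij}(\tau_i^{-1}(\tau))$ as a continuous local martingale in the new time scale; its quadratic variation is $[\tilde{M}]_\tau = \epsilon_{ij}^2 \tau_i^{-1}(\tau) = \epsilon_{ij}^2 t$, so $d[\tilde{M}]_\tau = (\epsilon_{ij}^2/a_i(t))\,d\tau$. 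By the Dambis--Dubin--Schwartz theorem, there exists a standard Brownian motion (in the $\tau$-scale, which, abusing notation, we still call $W_{ij}$) such that $d\tilde{M}(\tau) = \epsilon_{ij} a_i(t)^{-1/2} dW_{ij}(\tau)$. Assembling the drift and martingale pieces produces (\ref{state_eq2}); the initial condition $\tilde{X}_{ij}(0)=0$ is inherited from $X_{ij}(0)=0$ via $\tau_i^{-1}(0)=0$.

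The main obstacle is item 3, specifically keeping track of the quadratic variation under the random (but adapted) time change $\tau_i^{-1}$; once that is framed as a DDS-type argument, everything else reduces to bookkeeping. Parts 1 and 2 are essentially classical calculus and Brownian-motion manipulations, and I would present them briefly.
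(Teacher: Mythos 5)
Your proposal is correct and follows essentially the same route as the paper: chain rule plus strict monotonicity of $\tau_i$ for part 1, subtraction of the two OU integral equations together with L\'evy's characterization for part 2, and a time change combined with a Dambis--Dubin--Schwartz argument for part 3 (the paper runs the DDS identification in the opposite direction, starting from the stochastic integral $\int_0^\tau a_i(t')^{-1/2}\,dW_{ij}$ and showing it equals the time-changed Brownian motion, but this is the same identity established from the other end). The only cosmetic difference is that you spell out the quadratic-variation bookkeeping slightly more explicitly than the paper does.
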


\begin{proof}
The function $\tau_i(t)$ is strictly increasing and continuous on $\R_+$, and tends to infinity. Therefore, it is bijective so the inverse $\tau_i^{-1}(\tau)$ exists and
is also strictly increasing. By the chain rule, we have that $\frac{d}{d\tau}\tau_j(\tau_i^{-1}(\tau)) =
\dot{\tau_j}(t)\frac{d}{d\tau}\tau_i^{-1}(\tau) = a_j(t)\frac{1}{\dot{\tau_i}(t)} = \frac{a_j(t)}{a_i(t)}$, where  
$t := \tau_i^{-1}(\tau)$ and we used the notation  $\dot{f}(t) := \frac{d}{dt}f(t)$.
To prove the second part, note that the SDE (\ref{state_eq}) is simply a shorthand for the stochastic integral equation
\begin{equation}
X_i(t) = -\alpha_i \int_0^t X_i(t')\,dt' + \ep_iW_i(t).
\end{equation}
Then, since $\alpha_i=\alpha_j=\alpha$, it follows that $\frac{a_j(t)}{a_i(t)} = c_{ij}(t)e^{X_{ij}(t)}$ and the result
follows from
\begin{equation}
X_{ij}(t) = -\alpha \int_0^t X_{ij}(t')\,dt' + \ep_jW_j(t) - \ep_iW_i(t),
\end{equation}
and Levy's characterization of Brownian motion \cite{cont_mart}.
The last part follows by a time-scale change in the equation
\begin{equation}
X_{ij}(t) = -\alpha \int_0^t X_{ij}(t)\,dt + \ep_{ij}W_{ij}(t).
\end{equation}
Setting $\tilde{X}_{ij}(\tau) := X_{ij}(\tau_i^{-1}(\tau))$ we get
\begin{equation}
\tilde{X}_{ij}(\tau) = -\alpha \int_0^{\tau_i^{-1}(\tau)} X_{ij}(\tau')\,d\tau' + \ep_{ij}W_{ij}(\tau_i^{-1}(\tau)).
\end{equation}
Using the change of variable $t' = \tau_i^{-1}(\tau')$  we get that the drift term $-\alpha \int_0^{\tau_i^{-1}(\tau)}
X_{ij}(\tau')\,d\tau'$ becomes $-\alpha \int_0^{t} \frac{1}{a_i(t')}\tilde{X}_{ij}(t')\,dt'$.
Since $a_i(t)$ is characterized by a bijective transformation of $X_i(t)$ it is progressively measurable with respect to
the filtration $\sigma(W_{ij}(t)) := \sigma(W_i(t))\bigvee\sigma(W_j(t))$ \cite{stroock}. Hence it follows that $M_\tau
:= \int_0^{\tau} \frac{1}{\sqrt{a_i(t')}}dW_{ij}(\tau')$, for $t' = \tau^{-1}(\tau')$, is a continuous local martingale
\cite{cont_mart}, $M(0) = 0$, and its quadratic variation is $[M,M]_\tau = \tau_i^{-1}(\tau)$. By the celebrated Dambis,
Dubin-Schwartz theorem \cite{cont_mart} $M_\tau = W_{ij}([M,M]_\tau) =
W_{ij}(\tau_i^{-1}(\tau))$.
\end{proof}
\begin{rem}[Relative skew properties]\label{skew_properties}
From (\ref{OU2}), (\ref{relative_skew}), (\ref{cov_X}) it follows that it is not true that $\E[a_{ij}(t)] = 1$ nor that
$\E[a_{ij}(t)]\E[a_{ji}(t)] = 1$:
\begin{eqnarray}
\E[a_{ij}(t)] &=& c_{ij}(t) e^{\frac{1}{2}\E[X_{ij}^2(t)]} = e^{\frac{\ep_j^2}{2a}(1-e^{-2at})} > 1 ,\\
\E[a_{ij}(t)]\E[a_{ji}(t)] &\nearrow& e^{\frac{\ep_i^2 + \ep_j^2}{2a}} > 1.
\end{eqnarray}
\end{rem}

\begin{rem}[Convention]
Since the ``quality'' of a clock $i$ is characterized by the quantity $\frac{\alpha_i}{\ep_i^2}$ or $\frac{\alpha_i}{\ep_i}$, while in order to
define the translation equations for the nodes $i,j$ we needed $\alpha_i = \alpha_j$, we henceforth adopt for simplicity the convention that
$\alpha_i \equiv \alpha$ for all $i=1,\cdots,n$. The ``quality'' is then modeled solely by $\ep_i$, so we still allow for diverse clocks. Recall that $\alpha$ models the transient behavior, cf. (\ref{cov_X}).
\end{rem}

\begin{cor}[$L^p-$boundedness]\label{bounded}
For any $p\ge1$, the processes $X_i(t),X_{ij}(t)$ are $L^p-$ bounded Gaussian processes; the processes $a_i(t), a_{ij}(t), \frac{1}{a_i(t)}, \frac{1}{a_{ij}(t)}$ are also $L^p(P)-$ bounded.
\end{cor}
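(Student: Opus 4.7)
The plan is to reduce everything to a single fact: a stochastic integral of a deterministic $L^2$ kernel against Brownian motion is Gaussian with computable variance, and centered Gaussians have all moments controlled by their variance.

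First, I would invoke the integral representation $X_i(t)=\epsilon_i\int_0^t e^{-\alpha(t-s)}\,dW_i(s)$ that was recorded in Remark~\ref{OU_sim}. Since the integrand is deterministic and square-integrable, $X_i(t)$ is a centered Gaussian random variable with variance
\begin{equation*}
\sigma_i^2(t):=\operatorname{Var}(X_i(t))=\epsilon_i^2\int_0^t e^{-2\alpha(t-s)}\,ds=\frac{\epsilon_i^2}{2\alpha}\bigl(1-e^{-2\alpha t}\bigr)\le\frac{\epsilon_i^2}{2\alpha}.
\end{equation*}
Then $X_{ij}(t)=X_j(t)-X_i(t)$ is also centered Gaussian (sum/difference of independent Gaussians), and the identity $dX_{ij}=-\alpha X_{ij}\,dt+\epsilon_{ij}dW_{ij}$ from Proposition~\ref{translation_prop} gives the uniform bound $\operatorname{Var}(X_{ij}(t))\le\epsilon_{ij}^2/(2\alpha)$. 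Using the standard formula $\E|Z|^p=C_p\sigma^p$ for $Z\sim\N(0,\sigma^2)$, where $C_p$ depends only on $p$, this immediately yields $\sup_{t\ge 0}\E|X_i(t)|^p<\infty$ and $\sup_{t\ge 0}\E|X_{ij}(t)|^p<\infty$.

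Second, for the multiplicative processes I would exploit the Gaussian moment generating function. Since $c_i(t)\in[e^{-\epsilon_i^2/(4\alpha)},1]$ is bounded away from $0$ and from $\infty$ (and likewise for $c_{ij}(t)$ by its explicit form in Proposition~\ref{translation_prop}), and since $a_i(t)=c_i(t)e^{X_i(t)}$, for any $p\ge1$
\begin{equation*}
\E|a_i(t)|^p=c_i(t)^p\,\E[e^{pX_i(t)}]=c_i(t)^p\,e^{p^2\sigma_i^2(t)/2}\le e^{p^2\epsilon_i^2/(4\alpha)}.
\end{equation*}
The same computation with $-p$ in place of $p$ handles $1/a_i(t)=c_i(t)^{-1}e^{-X_i(t)}$, producing a uniform bound after absorbing the bounded factor $c_i(t)^{-p}$. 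The relative skew $a_{ij}(t)=c_{ij}(t)e^{X_{ij}(t)}$ and its reciprocal are handled identically using the bound on $\operatorname{Var}(X_{ij}(t))$ and the boundedness of $c_{ij}(t), c_{ij}(t)^{-1}$.

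There is no real obstacle: the argument is a one-line reduction to the Gaussian MGF once one notices that all processes in sight are either Gaussian with uniformly bounded variance or exponentials thereof multiplied by a deterministic, uniformly bounded factor. The only minor care needed is to verify that the prefactors $c_i(t)$ and $c_{ij}(t)$ stay bounded away from zero uniformly in $t$, which is immediate from their closed-form expressions since the exponents $-\tfrac{1}{4}\tfrac{\epsilon_i^2}{\alpha}(1-e^{-2\alpha t})$ and $\tfrac{1}{4}(\tfrac{\epsilon_j^2}{\alpha}-\tfrac{\epsilon_i^2}{\alpha})e^{-2\alpha t}$ remain in bounded intervals for all $t\ge0$.
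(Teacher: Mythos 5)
Your proof is correct and follows essentially the same route as the paper's (much terser) argument: Gaussianity of the OU processes with uniformly bounded variance via the explicit integral representation, then the log-normal moment formula $\E[e^{pX}]=e^{p^2\sigma^2/2}$ combined with the uniform boundedness of $c_i(t)$, $c_{ij}(t)$ and their reciprocals to handle $a_i$, $a_{ij}$ and their inverses. You have simply filled in the details that the paper leaves as ``the rest follows from (\ref{skew_eq}), (\ref{relative_skew})''.
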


\begin{proof}
The case where $i=j=0$ is trivial, since then $X_0(t) \equiv 0, a_0(t) \equiv 1$.
The Ornstein-Uhlenbeck SDE is a linear SDE, hence the processes $X_i(t),X_{ij}(t)$ are Gaussian, mean $0$, and by
(\ref{cov_X}), they have uniformly bounded variance.
The rest follows from (\ref{skew_eq}), (\ref{relative_skew}).
\end{proof}

\subsection{Allan variance}

Allan variance is as a performance index for clock stability. Consider the average skew of the $i$-th clock in the
interval $[t-T,t]$:
\begin{equation}\label{mean_skew}
\bar{a}_i(t) := \frac{1}{T}\int_{t-T}^{t}a_i(t')\,dt' \ = \frac{\tau_i(t) - \tau_i(t-T)}{T}.
\end{equation}
The Allan variance $\sigma_a^2(T)$ is defined by~\cite{allan}
\begin{equation}
\sigma_{a_i}^2(T) := \frac{1}{2} \lim_{T' \rightarrow \infty}\frac{1}{T'} \int_{0}^{T'} (\bar{a}_i(t+T) -
\bar{a}_i(t))^2\,dT',
\end{equation}
provided that the limit exists.

\begin{rem}[Numerical approximation of Allan variance]\label{appr_allan}
In practice, the Allan variance is approximated based on $N+1$ periodic measurements of clock $i$'s display with period
$T$ by:
\begin{equation}
\sigma_{a_i}^2(T) \simeq \frac{1}{2N}\sum_{k=1}^{N}(\bar{a}_i((k+1)T) - \bar{a}_i(kT))^2,
\end{equation}
where $N$ is sufficiently large.
%
In the case of the proposed stochastic model the Allan variance can be defined as
\begin{equation}\label{stochastic_allan}
\sigma_{a_i}^2(T) := \frac{1}{2} \lim_{t \rightarrow \infty} \E[(\bar{a}_i(t+T) - \bar{a}_i(t))^2],
\end{equation}
where we have used the ergodicity of the OU process~\cite{cont_mart}. Using (\ref{stochastic_allan}), (\ref{mean_skew}) and (\ref{skew_cor}) we get
\begin{equation}\label{allan_model}
\sigma_{a_i}^2(T) = \frac{1}{T^2} (\int_0^T \int_0^T e^{\frac{\ep_i^2}{2\alpha_i} e^{-\alpha_i|t-s|}}\,ds\,dt - \int_0^T
\int_{-T}^0 e^{\frac{\ep_i^2}{2\alpha_i} e^{-\alpha_i|t-s|}}\,ds\,dt).
\end{equation}
\end{rem}

\begin{rem}\label{asd}
Note that the Allan variance is not equal to the sample variance of the skew process. It is also different than
$\lim_{t\rightarrow \infty} \frac{1}{2}\E[(a_i(t+T) - a_i(t))^2]$, which for our model is equal to
$e^{\frac{\ep_i^2}{2\alpha_i}} -
e^{\frac{\ep_i^2}{2\alpha_i} e^{-\alpha_iT}}$, an increasing function of $T$. We will refer to this quantity as
``asymptotic skew difference variance''.
\end{rem}

\begin{rem}[System identification]\label{syst_ident}
Based on a numerical approximation of Allan variance through periodic measurements (as in Remark \ref{appr_allan})
parameters $\alpha_i,\ep_i$ can be estimated using standard curve-fitting optimization techniques and the analytical
formula (\ref{allan_model}).
\end{rem}

\section{Measurement model} \label{meas_sec}

We model the communication network with a graph $G = (V,E)$. Node $i$ can send
packets to node $j$ if and only if $(i,j)\in E$. We allow such pair of nodes to exchange
time-stamped packets. When a node $i$ sends its $k-$th packet, it includes a time-stamp of the sent
time according to its own clock, $s^{(k)}_i$. We allow for broadcast, i.e., for a sending packet to
be received by two neighboring nodes. When node $j$ receives this time-stamped packet it records
the time according to its clock of when it received the packet, $r^{(k)}_{i,j}$. We
assume that a packet sent over a link $(i,j)$ undergoes a delay of $d_{ij}^{(k)}$ time units (as
measured by the reference clock); recall that this includes not only the electromagnetic propagation delay, but
rather the total elapsed time between time-stamping at the transmitter side and
time-stamping at the receiver end~\cite{nfrer_TAC}. Delays are typically random and asymmetric ~\cite{nfrer_TAC}; in general,
$d_{ij} \ne d_{ji}$ and $d_{ij} \ne d_{il}$ for $(i,j),(j,i),(i,l)\in E$. Note also that the time instants when a time-stamped message is sent are typically determined based on the sender clocks, and are therefore stochastic.

To obtain a noisy measurement of the relative skew $a_{ij}$, node $i$ needs to send two consecutive time-stamped packets to
node $j$; this communication scheme is depicted in Figure \ref{ping}. The measurement model we adopt is summarized in the next theorem;
the analysis can be found in appendix~\ref{model_just_app}.

\begin{figure}
    \centering
    \includegraphics[totalheight=0.25\textheight]{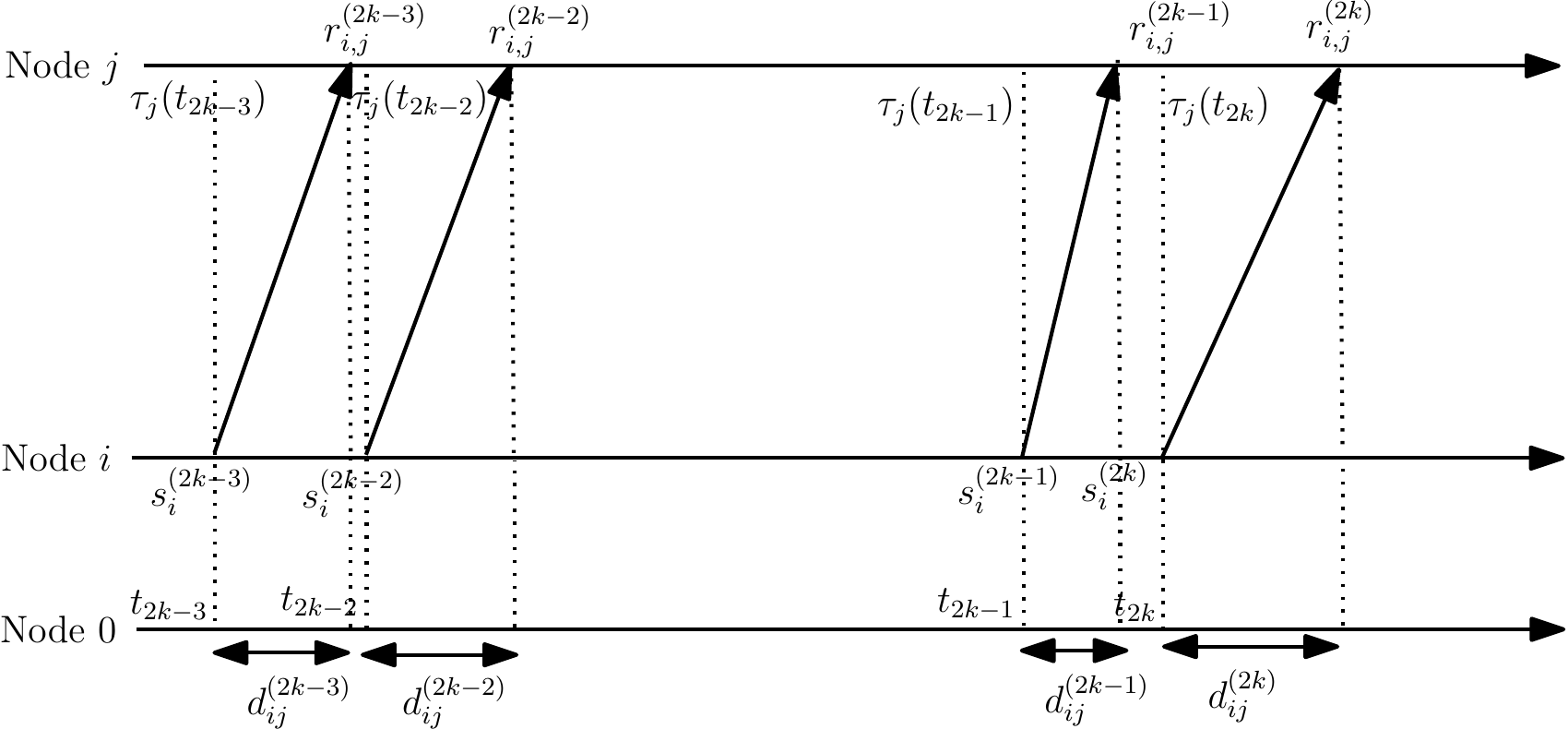}
    \caption{Communication via exchange of time-stamped packets}\label{ping}
\end{figure}

\begin{thm}[Measurement at a link $(i,j)$]\label{meas_thm}
~\\
In a link $(i,j)$, node $j$ can make a noisy measurement of the logarithm of the relative skew at time $t_k$,
$a_{ij}(t_k)$, by sending two consecutive packets at times $t_k,t_{k+1}$ and using all four send and receive
time-stamps. The measurement is of the form
\begin{equation}
y_{ij}(t_k) := X_{ij}(t_k) + v_{ij}(t_k),
\end{equation}
where \footnote{For a link $(i,j)$ such that $i,j \ne 0$, we could also ignore $e^{-2\alpha t_k}$, because neither $i$
nor $j$ has access to $t$. The error of such approximation is negligible for large values of $t$, and is dominated by
the noise term.}
\begin{equation}\label{link_meas}
y_{ij}(t_k) := \log|\frac{r_{i,j}^{(k+1)} - r_{i,j}^{(k)}}{s_i^{(k+1)} - s_i^{(k)}}| -
\frac{1}{4}(\frac{\ep_j^2}{\alpha} - \frac{\ep_i^2}{\alpha})(1-e^{-2\alpha t_k}),
\end{equation}
and $v_{ij}(t_k) \sim \mathcal{N}(0,\sigma^2_{ij}(t_k))$, is white Gaussian noise.
\end{thm}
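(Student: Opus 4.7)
The plan is to express the four time-stamps via the clock functions of Section~\ref{clock_model}, reduce each consecutive difference to a short-interval integral of the skew, and then use the factorization $a_{ij}(t) = c_{ij}(t)e^{X_{ij}(t)}$ from Proposition~\ref{translation_prop} to extract $X_{ij}(t_k)$ from the log-ratio.

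First I would write $s_i^{(k)} = \tau_i(t_k)$ and $r_{i,j}^{(k)} = \tau_j(t_k + d_{ij}^{(k)})$, so that by~(\ref{disp_eq})
\begin{equation*}
s_i^{(k+1)} - s_i^{(k)} = \int_{t_k}^{t_{k+1}}\!\!a_i(u)\,du,\qquad r_{i,j}^{(k+1)} - r_{i,j}^{(k)} = \int_{t_k+d_{ij}^{(k)}}^{t_{k+1}+d_{ij}^{(k+1)}}\!\!a_j(u)\,du.
\end{equation*}
Under the stated regime of sufficiently frequent exchanges and slowly-varying delays, a first-order expansion of the integrands at $t_k$ yields
\begin{equation*}
s_i^{(k+1)} - s_i^{(k)} \approx a_i(t_k)(t_{k+1}-t_k),\qquad r_{i,j}^{(k+1)} - r_{i,j}^{(k)} \approx a_j(t_k)\bigl[(t_{k+1}-t_k)+(d_{ij}^{(k+1)}-d_{ij}^{(k)})\bigr],
\end{equation*}
up to second-order residuals that will be absorbed into the noise term.

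Next, I would form the ratio, apply $\log(1+x)\approx x$ to the small delay-jitter perturbation, and substitute $\log a_{ij}(t_k) = \log c_{ij}(t_k) + X_{ij}(t_k)$ coming from~(\ref{relative_skew}) to obtain
\begin{equation*}
\log\Bigl|\frac{r_{i,j}^{(k+1)} - r_{i,j}^{(k)}}{s_i^{(k+1)} - s_i^{(k)}}\Bigr| \approx X_{ij}(t_k) + \log c_{ij}(t_k) + \frac{d_{ij}^{(k+1)}-d_{ij}^{(k)}}{t_{k+1}-t_k}.
\end{equation*}
Plugging in the closed form $\log c_{ij}(t_k) = -\tfrac{1}{4}(\ep_j^2-\ep_i^2)\alpha^{-1}(1-e^{-2\alpha t_k})$ that follows from the definition of $c_{ij}(t)$ in Proposition~\ref{translation_prop}, and rearranging according to the definition~(\ref{link_meas}) of $y_{ij}(t_k)$, isolates $X_{ij}(t_k)$ on the right-hand side, with the delay-jitter term and higher-order residuals constituting $v_{ij}(t_k)$.

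The principal obstacle is justifying $v_{ij}(t_k)\sim\N(0,\sigma_{ij}^2(t_k))$. This requires imposing a probabilistic model on the delay increments $d_{ij}^{(k+1)}-d_{ij}^{(k)}$: either they are directly assumed approximately Gaussian, or they are decomposed into many small independent contributions so that a central-limit argument applies, with $\sigma_{ij}^2(t_k)$ inherited from the variance of the normalized increment $(d_{ij}^{(k+1)}-d_{ij}^{(k)})/(t_{k+1}-t_k)$. One must additionally control the omitted second-order terms (from the Taylor expansion of the skew integrals and from $\log(1+x)-x$) so that they are dominated by the delay-jitter fluctuation and do not introduce bias at leading order; such control is obtained under a short-sampling-interval and small-jitter regime, consistent with the standing assumptions. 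These technical steps would be carried out in Appendix~\ref{model_just_app}.
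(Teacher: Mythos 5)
Your proposal follows essentially the same route as the paper's Appendix~\ref{model_just_app}: express the time-stamps as skew integrals, perform a first-order (stochastic) Taylor expansion of $\tau_i$, $\tau_j$ and $a_j$ around $t_k$, take the logarithm of the ratio, and peel off $\log c_{ij}(t_k)$ to leave $X_{ij}(t_k)$ plus a delay-jitter term and higher-order residuals. The only real divergence is at the point you flag as the principal obstacle: the paper does not derive Gaussianity of $v_{ij}$ at all --- it controls the residuals in $L^2$ via a stochastic Taylor lemma (It\^{o}'s isometry plus $L^p$-boundedness of the skew) under Assumption~\ref{delay_ass}, and then simply \emph{postulates} the white Gaussian model as a prerequisite for tractable linear filtering, so your proposed central-limit justification is more than the paper itself attempts.
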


\section{Pairwise synchronization of two clocks} \label{pairwise_sync}

We show how to filter the time-stamps in a directed link $(i,j)$ in order to derive the ML estimate of the relative skew
$a_{ij}(t)$. In this section, we use $t_k$ to denote the time instant (measured in the units of the reference time) at
which the $k$-th measurement is made
; this corresponds to $t_{2k}$ for the scheme shown in  Figure \ref{ping}.
This problem reduces to the continuous-discrete linear filtering problem~\cite{jaswinski}, for
the system
\begin{eqnarray}
dX_{ij}(t) &=& -\alpha X_{ij}(t)dt + \ep_{ij}dW_{ij}(t) \  ; \  X_{ij}(0) = 0, \label{state}\\
y_{ij}(t_k) &=& X_{ij}(t_k) + v_{ij}(t_k) \ ; \ v_{ij}(t_k) \sim \mathcal{N}(0,\sigma^2_{ij}(t_k)) \label{meas},
\end{eqnarray}
The ML estimate can be computed by the discrete Kalman-Bucy filter\footnote{Throughout, we use the standard notation for
estimates and conditional-unconditional error variances as in~\cite{jaswinski}.}
\begin{eqnarray}
d\hat{X}_{ij}(t) &=& -\alpha \hat{X}_{ij}dt  , \ \ t\ge0\ \ ; \hat{X}_{ij}(0) = 0,\label{state_est}\\
\hat{X}_{ij}(t_k^+) &=& \hat{X}_{ij}(t_k^-) + K(t_k)(y_{ij}(t_k) - \hat{X}_{ij}(t_k^-))\label{state_meas},\\
K(t_k) &=& \frac{P_{ij}^{t_k^-}(t_k)}{P_{ij}^{t_k^-}(t_k) + \sigma^2_{ij}(t_k)}\label{kalman_gain}.
\end{eqnarray}
The conditional error variance is equal to the unconditional error variance $P_{ij}(t)
:=\E[(X_{ij}(t) - \hat{X}_{ij}(t))^2] = \E[P_{ij}^t(t)]$~\cite{jaswinski}. It satisfies
\begin{eqnarray}
\frac{dP_{ij}^t(t)}{dt} &=& -2\alpha P_{ij}^t(t) + \ep_{ij}^2, \ \ t\ge 0,\label{cov}\\
P_{ij}^{t_k^+}(t_k) 
&=& \frac{\sigma^2_{ij}(t_k)}{P_{ij}^{t_k^-}(t_k) + \sigma^2_{ij}(t_k)} P_{ij}^{t_k^-}(t_k).\label{cov_measur}
\end{eqnarray}
The optimal filter is uniformly asymptotically stable, cf. (\ref{state_est}), and has uniformly bounded variance, since
the unconditional state variance is uniformly bounded, cf. Lemma \ref{prop_lemma}.1.

\begin{rem}[Estimation of relative skew]\label{skew_estimator}
The estimate $\hat{X}_{ij}(t)$ and its error variance $P_{ij}^t(t)$ can be used to obtain the optimal estimate for the
relative skew $a_{ij}(t)$. This is because the conditional distribution of $X_{ij}(t)$ given the measurements at time
$t$, $\mathcal{Y}_t:=\{y_{ij}(t_k)\}_{t_k\le t} \cup \{y_{ji}(t_k)\}_{t_k\le t}$, is $\mathcal{N}(\hat{X}_{ij}(t),P_{ij}^t(t))$. Hence
\begin{equation}\label{skew_estimator_formula}
\E[a_{ij}(t)|\mathcal{Y}_t] = c_{ij}(t) \E[e^{X_{ij}(t)}|\mathcal{Y}_t] = c_{ij}(t) e^{\hat{X}_{ij}(t) +
\frac{1}{2}P_{ij}^t(t)}.
\end{equation}
This estimate has uniformly bounded conditional and unconditional variance, as well.
\end{rem}
\begin{rem}[Antisymmetry property]\label{antisymmetry}
From the filtering equations one can easily check that if we define the measurement $y_{ji}(t_k) := -y_{ij}(t_k)$, and
assume that $\sigma^2_{ij}(t_k) = \sigma^2_{ji}(t_k)$, then for all $t\ge0$ we have that $\hat{X}_{ji}(t) = - \hat{X}_{ij}(t), \
P_{ij}(t) = P_{ij}(t)$. This means that the optimal estimates can be, in principle, obtained by both communicating nodes. We further study the implementation of the filter in the next section.
%
\end{rem}
\begin{thm}[Filter error variance] \label{filter_variance_thm}
The error variance of the optimal filter satisfies
\begin{eqnarray}
\bar{P} \le \frac{1}{2}[ -(1-A)(\Sigma^2 - E) + \sqrt{(1-A)^2(\Sigma^2 - E)^2 + 4(1-A)E\Sigma^2}],
\end{eqnarray}
where $E:= \frac{\ep_{ij}^2}{2\alpha}$, $A = e^{-2\alpha \bar{T}}$ and $\bar{P}, \bar{T}, \Sigma^2$ denote either the
supremum or limit superior of $P_{ij}(t), t_{k+1} - t_k, \sigma^2_{ij}(t_k)$, respectively. 
In particular,
\begin{equation}
\lim_{\sup(t_{k+1} - t_k) \to 0} \sup_{t\ge 0 } P_{ij}(t) = 0.
\end{equation}
\end{thm}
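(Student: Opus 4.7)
The plan is to reduce the analysis of the continuous-discrete variance $P_{ij}(t)$ to a scalar discrete recursion on the pre-measurement peaks $p_k := P_{ij}^{t_k^-}(t_k)$, and then to extract the claimed bound as the fixed point of a worst-case version of that recursion. First I would integrate the scalar linear Riccati equation (\ref{cov}) exactly on an inter-measurement interval: with $E := \ep_{ij}^2/(2\alpha)$ and $A_k := e^{-2\alpha(t_{k+1}-t_k)}$, the solution is $P_{ij}^t(t) = E + (P_{ij}^{t_k^+}(t_k) - E)\,e^{-2\alpha(t-t_k)}$, so that $P_{ij}^{t_{k+1}^-}(t_{k+1}) = E + A_k(P_{ij}^{t_k^+}(t_k) - E)$. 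Combining with the measurement update (\ref{cov_measur}) yields the deterministic recursion
\[
p_{k+1} = (1-A_k)\,E + A_k\,\frac{\Sigma_k^2\, p_k}{p_k + \Sigma_k^2},\qquad \Sigma_k^2 := \sigma_{ij}^2(t_k).
\]
Since between measurements $P_{ij}^t(t)$ is monotone in $t$, approaching $E$ from whichever side, the supremum (respectively limsup) of $P_{ij}(t)$ over $t$ coincides with the supremum (respectively limsup) of $\{p_k\}$.

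I would then establish the pointwise invariant $p_k \le E$. Indeed, the measurement-free Riccati with $P(0)=0$ produces $E(1-e^{-2\alpha t}) < E$, and the update (\ref{cov_measur}) only shrinks the variance; hence $p_k \le E$ for every $k$. In this regime the shrunk value $M_k := \Sigma_k^2 p_k/(p_k + \Sigma_k^2)$ satisfies $M_k \le p_k \le E$, so the recursion reads $p_{k+1} = E - A_k(E - M_k)$ with $E - M_k \ge 0$. The worst-case comparison then follows: the map $\sigma^2 \mapsto \sigma^2 p_k/(p_k+\sigma^2)$ is monotone increasing, so $M_k \le \bar M_k := \Sigma^2 p_k/(p_k + \Sigma^2)$; and $\bar T \ge t_{k+1}-t_k$ forces $A \le A_k$, so multiplying the non-negative quantity $E - \bar M_k$ by the smaller $A$ yields a smaller subtraction. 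Combining, $p_{k+1} \le E - A(E-\bar M_k) = (1-A)\,E + A\,\Sigma^2 p_k/(p_k + \Sigma^2) =: \phi(p_k)$.

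A fixed-point analysis of $\phi$ closes the argument. Since $\phi'(p) = A\,\Sigma^4/(p+\Sigma^2)^2 > 0$, $\phi$ is monotone increasing with $\phi(0) = (1-A)E > 0$ and $\phi(E) = E[\,1 - AE/(E+\Sigma^2)\,] < E$; hence $\phi$ has a unique positive fixed point $p^* \in (0,E)$, and iterates from any $p_0 \in [0,E]$ converge monotonically to $p^*$. By monotonicity of $\phi$, the inequality $p_{k+1} \le \phi(p_k)$ lifts inductively to $p_k \le \phi^{\circ k}(0) \uparrow p^*$, so $\bar P \le p^*$. Solving $\phi(p^*) = p^*$ produces the quadratic $(p^*)^2 + (1-A)(\Sigma^2-E)\,p^* - (1-A)E\Sigma^2 = 0$, whose positive root is exactly the expression in the statement; the terminal claim $\sup_t P_{ij}(t) \to 0$ as $\sup(t_{k+1}-t_k) \to 0$ follows because $A \uparrow 1$ drives the constant term $-(1-A)E\Sigma^2$, and with it the positive root, to zero. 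The delicate point throughout is the direction of monotonicity in $A_k$: the scalar map $A \mapsto (1-A)E + AM$ is decreasing in $A$ only when $M < E$, so without the invariant $p_k \le E$ (equivalently $M_k \le E$), replacing $A_k$ by the smaller $A$ could go the wrong way and yield a lower bound instead of an upper one. Securing that invariant is therefore the load-bearing step that makes the worst-case substitution produce the correct inequality.
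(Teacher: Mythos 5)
Your argument follows essentially the same route as the paper's proof: integrate the Riccati equation over an inter-measurement interval, invoke the invariant $P\le E$ to justify the worst-case substitution $e^{-2\alpha(t_{k+1}-t_k)}\to A$ and $\sigma_{ij}^2(t_k)\to\Sigma^2$, combine with the measurement update, and bound $\bar P$ by the positive root of the resulting quadratic. The only difference is presentational — you make explicit the monotone fixed-point iteration and the load-bearing role of $p_k\le E$, which the paper compresses into ``taking $\sup$ or $\overline{\lim}$ on both sides'' — so the proposal is correct and matches the paper.
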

\begin{proof}
It follows from (\ref{cov}) that for $t \in [t_k,t_{k+1})$
\begin{equation}\label{var_bd1}
P_{ij}^t(t) = (P_{ij}^{t_k^+}(t_k) - E)e^{-2\alpha (t - t_k)} + E
            \le (P_{ij}^{t_k^+}(t_k) - E) A + E,
\end{equation}
since $P_{ij}(t) \le E$ for all $t\ge0$.
From (\ref{cov_measur}), we get that
\begin{equation}\label{var_bd2}
P_{ij}^{t_k^+}(t_k) = \frac{1}{\frac{1}{P_{ij}^{t_k^-}(t_k)} + \frac{1}{\sigma^2_{ij}(t_k)}}.
\end{equation}
Substituting this into (\ref{var_bd1}) and taking $\sup$ or $\overline{\lim}$ in both sides of the resulting equation
yields the inequality  $\bar{P} \le (\frac{1}{\frac{1}{\bar{P}} + \frac{1}{\Sigma^2}} - E) A + E$.
Solving this equation leads to a quadratic equation. The upper bound is obtained the largest
solution of the quadratic equation (which is positive and the quadratic is increasing at that
point).
\end{proof}

\subsection{Implementation issues}\label{implementation} 

In an actual implementation the optimal estimator is run by receiving node $j$. We point out an issue in that
node $j$ does not know the reference time evolution $t$, which is in fact the unknown it effectively seeks to estimate.
Therefore it cannot run the differential equations (\ref{state_est}), (\ref{cov}). 

To tackle this, we propose a practical fix which leads to a stable filter with uniformly bounded error variance.
Define $\tilde{\hat{X}}_{ij}(\tau) := \hat{X}_{ij}(\tau_j^{-1}(\tau)),$ where again  $t = \tau_j^{-1}(\tau)$. 
It follows directly from (\ref{state_eq}) that
\begin{equation}\label{est_tran}
\frac{d\tilde{\hat{X}}_{ij}(\tau)}{d\tau} = -\alpha\frac{1}{a_j(t)}\tilde{\hat{X}}_{ij}(\tau).
\end{equation}
Note that $a_j(t)$ is a random process, which is not measurable by $j$, therefore (\ref{est_tran}) is not an ordinary differential equation.
%
%
Let us define the suboptimal filter
\begin{equation}\label{est_tran2}
\frac{d\bar{X}_{ij}(\tau_j)}{d\tau} = -\alpha\frac{1}{f_j(\tau)}\bar{X}_{ij}(\tau),
\end{equation}
where $f_j(t)$ is measurable 
with respect to $\mathcal{Y}_{ij}^t:=\{y_{ij}(t_k)\}_{t_k\le t} \cup \{y_{ji}(t_k)\}_{t_k\le t}$ 
and  $f_j(t) > 0$ holds a.s.  For simplicity, we may set $f_j(t) = 1 = \E[a_j(t)]$ or $f_j(t) = \E[a_j(t) | \mathcal{Y}_{ij}^t]$
but the analysis holds for any a.s. positive $\mathcal{Y}_{ij}^t-$ measurable function. 
For any such $f_j(t)$ we obtain a filter
with bounded unconditional error covariance provided that, as before, the noise variance is uniformly bounded. 
%

\begin{thm}[Properties of the suboptimal filter]\label{subopt_thm}
Let the suboptimal estimator $\bar{X}_{ij}(\tau_j)$ run at node $j$
\begin{equation}\label{filter_suboptimal}
\frac{d\bar{X}_{ij}(t)}{dt} = -\frac{a_j(t)}{f_j(t)}\alpha\bar{X}_{ij}(t),
\end{equation}
where $f_j(t)$, is a $\mathcal{Y}_{ij}^t - $ measurable 
random variable such that $f_j(t) > 0, $ a.s.
At a measurement let
\begin{eqnarray}\label{filter_meaus}
\bar{X}_{ij}(t_k^+) &=& \bar{X}_{ij}(t_k^-) + k(t_k)(y_{ij}(t_k) - \bar{X}_{ij}(t_k^-)),
\end{eqnarray}
where $k(t_k)\in [0,1]$ is a $\mathcal{Y}_{ij}^{t_{k}}$ - measurable random variable, and $y_{ij}(t_k)$ is the measurement as
in Theorem~\ref{meas_thm}.
Assume that the measurement error variance sequence satisfies
\begin{equation}
\sup_k \sigma_{ij}^2(t_k) = \Sigma^2 < +\infty,
\end{equation}
Then, the filter is uniformly asymptotically stable and has bounded unconditional error variance for any selection of $f_j(t)>0$ and $\{k(t_k)\} \subset [0,1]$.
\hide{
\begin{enumerate}
\item The filter is uniformly asymptotically stable.
\item The filter has bounded unconditional error variance for any selection of $k(t_k)$,
$k(t_k) \in (0,1)$ and $\inf_k k(t_k) > 0$.
\item Selecting the gain as $k(t_k) = \min(\frac{\tilde{P}_{ij}^-(t_k)}{\tilde{P}_{ij}^-(t_k) +
\sigma_{ij}^2(t_k)}, \underline{k})$, for some $\underline{k}>0$, where
      \begin{eqnarray}
      \frac{d\tilde{P}_{ij}(\tau_j)}{d\tau_j} &=& \frac{1}{f_j(t)} (-2\alpha \tilde{P}_{ij}(\tau_j) + \ep_{ij}^2) ; \ \
\tilde{P}_{ij}^0(0) = 0,\\
      \Leftrightarrow \frac{d\tilde{P}_{ij}(t)}{dt} &=& \frac{a_j(t)}{f_j(t)}(-2\alpha \tilde{P}_{ij}(t) +
\ep_{ij}^2) ; \ \ \tilde{P}_{ij}^0(0) = 0,\\
      \tilde{P}_{ij}^{t_k^+}(t_k) &=& \frac{\sigma_{ij}^2(t_k)}{\tilde{P}_{ij}^{t_k^-}(t_k) + \sigma^2_{ij}(t_k)}
\tilde{P}_{ij}^{t_k^-}(t_k),\label{cov_meas}
      \end{eqnarray}
then for any selection of $f_j(t) \le a_j(t)$\footnote{For instance, one possible selection is to consider $f_j(t) = e^{- \frac{1}{4}\frac{\ep_j^2}{\alpha_j}}e^{\hat{X}_j(t) - cP_j(t)}$ with $c\ge3$, which gives $f_j(t) \le a_j(t)$ with probability greater or equal than $0.9985$ for given $t$.}, the filter has uniformly bounded unconditional error variance.
\end{enumerate}
}
\end{thm}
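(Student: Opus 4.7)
The plan is to decompose the filter analysis into the continuous evolution between measurements and the discrete jump at each measurement time, and exploit a uniform non-expansiveness property in both regimes. Between measurements, $\frac{d}{dt}\bar{X}_{ij}^2 = -2\alpha (a_j(t)/f_j(t))\bar{X}_{ij}^2 \le 0$ almost surely (since $a_j, f_j > 0$ a.s.), so $|\bar{X}_{ij}(t)|$ is pathwise non-increasing; at a measurement, $\bar{X}_{ij}(t_k^+) = (1-k(t_k))\bar{X}_{ij}(t_k^-) + k(t_k)y_{ij}(t_k)$ is a convex combination with weights in $[0,1]$. This pair of observations is the workhorse for both parts of the claim.

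For uniform asymptotic stability, I would read the claim off the non-expansiveness directly: the homogeneous (zero-measurement) dynamics give $|\bar{X}_{ij}(t)| \le |\bar{X}_{ij}(s)|$ for all $t \ge s$, so the filter transition operator is Lyapunov stable; asymptotic decay to zero then follows because $a_j/f_j$ is a.s.\ strictly positive, forcing $\int_0^\infty (a_j(s)/f_j(s))\,ds = +\infty$ a.s., which drives the ODE part to zero on every sample path. For bounded error variance, use the decomposition $\E[(X_{ij}(t) - \bar{X}_{ij}(t))^2] \le 2\E[X_{ij}^2(t)] + 2\E[\bar{X}_{ij}^2(t)]$: the first term is bounded by $\ep_{ij}^2/(2\alpha)$ by the OU variance bound (Corollary~\ref{bounded}), and for the second, Jensen's inequality applied pathwise to the convex update yields $\bar{X}_{ij}^2(t_k^+) \le (1-k(t_k))\bar{X}_{ij}^2(t_k^-) + k(t_k) y_{ij}^2(t_k)$ a.s. Since $\E[y_{ij}^2(t_k)] = \E[X_{ij}^2(t_k)] + \sigma_{ij}^2(t_k) \le \ep_{ij}^2/(2\alpha) + \Sigma^2 =: M$, and $\E[\bar{X}_{ij}^2(t_{k+1}^-)] \le \E[\bar{X}_{ij}^2(t_k^+)]$ by the between-measurement monotonicity, a recursion of the form $V_{k+1}^+ \le V_k^+ \vee M$ gives $\sup_t \E[\bar{X}_{ij}^2(t)] < \infty$.

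The hard part is the random, measurement-dependent gain $k(t_k)$: being $\mathcal{Y}_{ij}^{t_k}$-measurable, it can be correlated with $y_{ij}(t_k)$ itself, which breaks the naive bound $\E[(1-k_k)A + k_k B] \le \E[A] \vee \E[B]$ one would like to iterate. I would handle this by conditioning on $\mathcal{F}_{t_k^-}$---where $\bar{X}_{ij}(t_k^-)$ and $X_{ij}(t_k)$ are frozen and only the fresh Gaussian innovation $v_{ij}(t_k)$ is random---applying Jensen conditionally, and then taking the outer expectation. An alternative I would pursue in parallel, for the error bound alone, is to apply a Gr\"{o}nwall argument to $\E[e^2]$ via the SDE $de = -\alpha e\,dt + \alpha(a_j/f_j - 1)\bar{X}_{ij}\,dt + \ep_{ij}\,dW$, dominating the cross term $\alpha(a_j/f_j - 1)\bar{X}_{ij}$ by Young's inequality together with the $L^2$-bound on $\bar{X}_{ij}$ established above.
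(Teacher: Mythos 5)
Your proposal follows essentially the same route as the paper's proof: uniform asymptotic stability via the Lyapunov function $V(x)=\tfrac12 x^2$ (equivalently, the pathwise contraction $\bar{X}_{ij}^2(t) = \bar{X}_{ij}^2(t_{k-1})e^{-2\alpha\int a_j/f_j}\le \bar{X}_{ij}^2(t_{k-1})$ between measurements), the decomposition $\bar{P}_{ij}(t)\le 2\E[\bar{X}_{ij}^2(t)]+2\E[X_{ij}^2(t)]$ with the OU bound $\E[X_{ij}^2(t)]\le(\ep_i^2+\ep_j^2)/(2\alpha)$, and convexity of the squared update in the gain at a measurement to obtain the recursion $\E[\bar{X}_{ij}^2(t_k^+)]\le\max\bigl(\E[\bar{X}_{ij}^2(t_k^-)],\,C\bigr)$ with $C=(\ep_i^2+\ep_j^2)/(2\alpha)+\Sigma^2$. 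The only substantive difference is that you explicitly flag, and propose to patch by conditioning on $\mathcal{F}_{t_k^-}$, the correlation between the $\mathcal{Y}_{ij}^{t_k}$-measurable gain $k(t_k)$ and the measurement $y_{ij}(t_k)$ --- a subtlety the paper's one-line convexity step passes over silently.
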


\begin{proof}
The fact that $\bar{X}_{ij}(t)$ is uniformly asymptotically stable follows directly by using the Lyapunov function $V(x) =
\frac{1}{2}x^2$, in (\ref{filter_suboptimal}). 
To establish boundedness of the unconditional error variance, note the useful inequality $\bar{P}_{ij}(t) \le 2\E[\bar{X}_{ij}^2(t)] + 2\E[X_{ij}^2(t)]$; in addition, $X_{ij}(t)$ is $L^2-$bounded and $\E[X_{ij}^2(t)]\le \frac{\ep_i^2 + \ep_j^2}{2\alpha}$ (cf. Prop. \ref{translation_prop}, and (\ref{cov_X})). 

In an interval between two measurements, $t\in [t_{k-1},t_k)$, it holds:
\begin{equation}
\bar{X}_{ij}^2(t) = \bar{X}_{ij}^2(t_{k-1})e^{-2\alpha\int_{t_k}^{t}\frac{a_j(t)}{f_j(t)}} \le \bar{X}_{ij}^2(t_{k-1});
\end{equation}
%
%
at a measurement, $\bar{X}_{ij}^2(t_k)$ is convex in the gain $k(t_k)$, whence:
\begin{align}
\E[\bar{X}_{ij}^2(t_k^+)] \le& \max(\E[\bar{X}_{ij}^2(t_k^+)], \E[X_{ij}^2(t_k)] + \sigma^2_{ij}(t_k))\nonumber\\
\le& \max(\E[\bar{X}_{ij}^2(t_k^+)], C), 
\end{align}
for $C:=  \frac{\ep_i^2 + \ep_j^2}{2\alpha} + \Sigma^2$. Both together establish that $\bar{X}_{ij}^2(t)$ is $L_2-$bounded.
\end{proof}
\begin{rem}[Gain selection] One option is to pick the gain as $k(t_k) = \min(\frac{\tilde{P}_{ij}^-(t_k)}{\tilde{P}_{ij}^-(t_k) +
\sigma_{ij}^2(t_k)}, \underline{k})$, for some $\underline{k}>0$, where
      \begin{eqnarray}
      \frac{d\tilde{P}_{ij}(\tau_j)}{d\tau_j} &=& \frac{1}{f_j(t)} (-2\alpha \tilde{P}_{ij}(\tau_j) + \ep_{ij}^2) ; \ \
\tilde{P}_{ij}^0(0) = 0,\\
      \Leftrightarrow \frac{d\tilde{P}_{ij}(t)}{dt} &=& \frac{a_j(t)}{f_j(t)}(-2\alpha \tilde{P}_{ij}(t) +
\ep_{ij}^2) ; \ \ \tilde{P}_{ij}^0(0) = 0,\\
      \tilde{P}_{ij}^{t_k^+}(t_k) &=& \frac{\sigma_{ij}^2(t_k)}{\tilde{P}_{ij}^{t_k^-}(t_k) + \sigma^2_{ij}(t_k)}
\tilde{P}_{ij}^{t_k^-}(t_k),\label{cov_meas}
      \end{eqnarray}
\end{rem}      
The following remark summarizes a discrete version of the optimal filter.
unconditional error variance.

\begin{rem}[Discrete filter]\label{discrete_filter}

If the filter does not make estimate updates between measurements, then there is no need to run a differential equation,
and therefore no implementation issue; it is a discrete Kalman filter. Based on (\ref{sol}), its  state update equation
is
\begin{eqnarray}\label{state_discrete}
X_{ij}(t_{k+1}) = e^{-\alpha(t_{k+1} - t_k)} X_{ij}(t_k) + \Gamma(k)\bar{W}_{ij}(k+1),
\end{eqnarray}
where $\bar{W}_{ij}(k)$, is a Gaussian white noise sequence and  $\Gamma(k) := \frac{\ep_{ij}}{\sqrt{2\alpha}}\sqrt{1 -
e^{-2\alpha(t_{k+1} - t_k)}}$. Also,
\begin{eqnarray}
\hat{X}_{ij}^{t_{k+1}^-}(t_{k+1}) &=& e^{-\alpha(t_{k+1} - t_k)} \hat{X}_{ij}^{t_k^+}(t_k),\\
P_{ij}^{t_k^-}(t_k) &=& e^{-2\alpha(t_{k+1} - t_k)} P_{ij}^{t_{k-1}^+}(t_{k-1}) + \Gamma^2(k),\\
P_{ij}^{t_k^+}(t_k) &=& (1-k(t_k))^2P_{ij}^{t_k^-}(t_k) + k(t_k)^2\sigma^2_{ij}(t_k),
\end{eqnarray}
whence, using an inductive argument,
\begin{equation}
\sup_{t\ge 0}P_{ij}^{t}(t) \le \frac{\ep_{ij}^2}{2\alpha}.
\end{equation}
For a practical implementation, we could approximate $t_{k+1} - t_k$ by $\tau_j(t_{k+1}) - \tau_j(t_k)$ if $i,j \ne
0$.
\end{rem}

\section{Network-wide smoothing of pairwise estimates}\label{smoothing}
In this section we make a connection with distributed asynchronous smoothing of pairwise estimates.For illustration, we adopt the asynchronous Jabobi approach of~\cite{solis,barooah,smoothing}, but other approaches, such as the Randomized Kaczmarz~\cite{RKO,Rand_Kaczmarz}, which has showcased better accuracy and energy savings. 

We begin the section by presenting a general lemma, and then comment on how this can be used to develop a network-wide
offline state estimator. We denote the $L^2$-norm by $||\cdot||_2$.

\begin{lem}[A generalized least-squares problem]\label{rls}
Given square integrable random vectors $X, Y \in L^2(P)$, where $X\in\mathbb{R}^m$, consider the problem of estimating
$X$ by $A^Tv$ where $A\in\mathbb{R}^{m\times n}$ is a non-random matrix with $(AA^T)$ invertible, and $v\in
\mathbb{R}^n$ is $Y-$measurable and square-integrable. If the error criterion is $\E[\|X - A^Tv\|_2^2 | Y]$, then the
unique solution is
\begin{equation}
v  = (AA^T)^{-1}A\E[X | Y],
\end{equation}
which is the same as solving the deterministic least-squares problem for $v$ with error criterion  $\|\E[X|Y] -
A^Tv\|_2^2$.
\end{lem}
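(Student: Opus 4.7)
The plan is to reduce the stochastic estimation problem to a deterministic least-squares problem via the projection property of conditional expectation, and then invoke the standard normal equations. I will work component-wise with the assumption that $A^T v \in \mathbb{R}^m$ (so $A$ acts on $\mathbb{R}^m$ to produce an $n$-vector, and $(AA^T)^{-1}$ is the inverse of the $n\times n$ Gram matrix, as is consistent with the stated formula).

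First I will exploit the well-known orthogonality property of $L^2$-conditional expectation: for any $Y$-measurable, square-integrable $\mathbb{R}^m$-valued random vector $W$,
\begin{equation*}
\E\bigl[\|X - W\|_2^2 \,\big|\, Y\bigr] \;=\; \E\bigl[\|X - \E[X\mid Y]\|_2^2 \,\big|\, Y\bigr] \;+\; \|\E[X\mid Y] - W\|_2^2,
\end{equation*}
which follows component-wise from the identity $\E[(X_k - \E[X_k\mid Y])(\E[X_k\mid Y] - W_k)\mid Y]=0$, valid because $\E[X_k\mid Y] - W_k$ is $Y$-measurable and can be pulled out. I then apply this with $W := A^T v$; since $v$ is $Y$-measurable and square-integrable and $A$ is deterministic, so is $W$, and square-integrability of $W$ follows from that of $v$.

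Next, because the first term on the right-hand side does not depend on the choice of $v$, minimizing $\E[\|X - A^T v\|_2^2 \mid Y]$ is equivalent (pathwise) to minimizing the deterministic quadratic form $\|\E[X\mid Y] - A^T v\|_2^2$ in $v$. This is a textbook linear least-squares problem with normal equations $AA^T v = A\,\E[X\mid Y]$, and since $AA^T$ is invertible by assumption, the unique minimizer is $v = (AA^T)^{-1}A\,\E[X\mid Y]$. Uniqueness extends to the stochastic problem because both terms in the decomposition are nonnegative and the second vanishes only at this $v$. Finally, the resulting $v$ is automatically $Y$-measurable (as a measurable function of $\E[X\mid Y]$) and square-integrable (the linear map $b\mapsto (AA^T)^{-1}Ab$ is bounded), confirming it is admissible.

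The proof is essentially routine once the orthogonal decomposition is in place; the only mild subtlety is pointing out that $\E[X\mid Y]$ itself acts as the sufficient statistic and that admissibility (measurability plus $L^2$) of the candidate optimizer is preserved by the linear map $(AA^T)^{-1}A$. No obstacle beyond this—the result is really the observation that conditional $L^2$-projection commutes with a subsequent finite-dimensional linear least-squares projection.
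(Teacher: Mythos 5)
Your proof is correct and follows essentially the same route as the paper: the orthogonality principle for conditional expectation splits the criterion into a $v$-independent term plus the deterministic quadratic $\|\E[X\mid Y]-A^Tv\|_2^2$, which is then minimized via the normal equations. The extra remarks you add on the measurability and square-integrability of the resulting $v$ are a harmless refinement of the paper's terser argument.
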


\begin{proof}
By the orthogonality principle, we have $\E[\|X - A^Tv\|_2^2 | Y] = \E[\|X - \E[X|Y]\|_2^2 | Y] + \E[\|\E[X|Y] -
A^Tv\|_2^2 | Y]$, where the first term is independent of $v$, and the second one is $\|\E[X|Y] - A^Tv\|_2^2$.
Minimizing the second term over $v$ yields the result.
\end{proof}

Let us denote the directed graph of links across which state estimates are available at some given time\footnote{We drop
the time indices for ease of notation.} by $G = (V,E)$. We assume that the graph is connected. Let us also denote the
reduced incidence matrix of the graph \cite{graph} obtained from the incidence matrix by removing the row corresponding
to node $0$, by $A$. 
Then $AA^T$ is the principal submatrix of the \emph{Laplacian} of the graph \cite{graph}, which is known to
be positive
definite \cite{arvind} for connected graphs, and hence invertible.  Since $X_{ij} = X_j - X_i$ and $X_0 = 0$,
\begin{equation}\label{constraints}
X = A^Tv,
\end{equation}
where $X = \{X_{ij}\}\in \mathbb{R}^{|E|}$, $A\in\mathbb{R}^{|E|\times n}$, $v = \{X_i\}_1^n \in \mathbb{R}^n$.
Consider now the network-wide estimation problem as one of determining the MMSE estimate $v$, with error criterion
\begin{equation}
\E[\|X - A^Tv\|_2^2 | Y],
\end{equation}
where $Y :=\{y_{ij}\}$ denotes the $\sigma-$algebra  generated by the measurements obtained till the current time
instant, abusing notation and dropping the time indices. The unique solution is $v  = (AA^T)^{-1}A\E[X | Y],$ which
implies that we need to have access to the MMSE estimates of $X_{ij}$ given \emph{all} link measurements. However,
because the Brownian motions $W_{ij},W_{kl}$ are dependent if $i$ or $j\in \{k,l\}$, these MMSE estimators are obtained
based on measurements on \emph{at least} \footnote{In fact the optimal filter derived in section \ref{net_synch_sec}
shows that they might depend on measurements on \emph{all} links.} all links adjacent to $i,j$. Therefore, they are not
the same as the pairwise link estimates $\hat{X}_{ij}$ developed in section \ref{pairwise_sync}, and  cannot be used to
obtain the MMSE estimate $v$. A simple fix is to redefine the error criterion, with component-wise conditioning, as
\begin{equation}
\sum_{(i,j)\in E} \E[(X_{ij} - (A^Tv)_{ij})^2 | Y_{ij}],
\end{equation}
whence the MMSE estimate becomes $\bar{v} = (AA^T)^{-1}A\hat{X}$, where $\hat{X} = \{\hat{X}_{ij}\}$.
To solve this, we propose using the efficient asynchronous distributed algorithm of \cite{solis,arvind}.

Setting the $i-th$ derivative of $F(v)$ to 0, yields $(AA^T)_iv - A_i\hat{X} = 0$, and a straightforward analysis
\cite{arvind} shows that coordinate descent gives rise to an asynchronous scheme where node $i$ updates its estimate
$v_i$ according to
\begin{equation}\label{spatial_smoothing}
v_i = \frac{1}{d_i} \sum_{j: (i,j)\in E \mbox{ or } (j,i)\in E} (v_j + \hat{X}_{ji}),
\end{equation}
where $d_i$, denotes the total degree of node $i$. Note that the scheme is fully distributed and uses no information
about the network topology.

For our problem, spatial smoothing can be used to obtain
\begin{enumerate}
  \item Nodal offset estimates from relative offset estimates since $\tau_{ij} = \tau_j - \tau_i$.
  \item Nodal skew estimates from relative skew estimates since $\log{a_{ij}} = \log{a_j} - \log{a_i}$.
  \item Nodal state estimates from relative state estimates since $X_{ij} = X_j - X_i$.
\end{enumerate}

We note in passing that the synchronous version of the exact same scheme \cite{arvind} can be obtained by using
stochastic approximation \cite{borkar}. In order to solve the system $A^Tv - \hat{X} = 0$, or equivalently $AA^Tv -
A\hat{X} = 0$ we define the stochastic approximation scheme
\begin{equation}
v_{k+1} = v_k - a_k(AA^Tv_k - A\hat{X}),
\end{equation}
since the limiting ODE \cite{borkar} is $\dot{v}(t) = -AA^Tv + A\hat{X}$. In \cite{arvind}, they substitute $a_k$ by a
diagonal matrix $D = diag(\frac{1}{d_1},\hdots,\frac{1}{d_n})$ with entries the inverses of the relative degrees.

\section{Network clock synchronization} \label{net_synch_sec}

We now develop an asynchronous \footnote{Note that since the goal is to synchronize clocks, a synchronous algorithm can
only be implemented in a centralized fashion}
algorithm for the optimal network state estimation problem given by the continuous-discrete Kalman-Bucy filter
\cite{jaswinski} for the state vector $X_t = \{X_i(t)\}\in \mathbb{R}^n$. We use the convention that all vectors are
column vectors. Let us also define node $i$'s neighborhood $\mathcal{N}_i := \{j\in V : j=i, \hbox{ or
}(i,j)\in{E},\hbox{ or } (j,i)\in{E}\}$.  The following theorem summarizes the optimal continuous-discrete Kalman-Bucy
filter equations \cite{jaswinski}.

\begin{thm}[Network optimal state estimation]\label{centr_filter}
~\\
Suppose that nodes $j=0,\hdots n$ make noisy measurements $y_{ij}$ of $X_{ij}$ over links $(i,j)\in E$, in an
asynchronous fashion such that
\begin{equation}\label{measurements}
y_{ij}(t_k) = X_{ij}(t_k) + v_{ij}(t_k) = M(t_k)^TX(t_k) + v_{ij}(t_k),
\end{equation}
where $M(t_k)\in \mathbb{R}^n$, $v_{ij}(t_k) \sim \mathcal{N}(0,\sigma^2_{ij}(t_k))$, and
\begin{eqnarray}\label{M}
( M(t_k) )_m &=& \left\{
                 \begin{array}{ll}
                   -1, & m=i, \\
                   1, & m=j, \\
                   0, & \hbox{else.}
                 \end{array}
               \right.
\end{eqnarray}
Then $P_t^t = P_t$, i.e., the conditional and the unconditional covariance coincide. Between measurements, the optimal
filter is given by
\begin{eqnarray}
\frac{d\hat{X}_t}{dt} &=& -\alpha \hat{X}_t, \ \ \ \hat{X}_0 = 0,\label{state_est2}\\
\frac{dP_t}{dt} &=& -2\alpha P_t + E^2, \ \ \  P_0 = 0_{n\times n},\label{cov_X2}\\
E &=& diag(\ep_1^2,\hdots \ep_n^2).
\end{eqnarray}

At a measurement, the estimate is updated as follows:

\begin{eqnarray}
(\hat{X}_{t_k^+})_m &=& (\hat{X}_{t_k^-})_m + K(t_k)  (y_{ij}(t_k) - \hat{X}_j(t_k^-) +
\hat{X}_i(t_k^-)),\label{meas_state}\\
K(t_k) &=& \frac{P_{mj}^{t_k^-} - P_{mi}^{t_k^-}}{c_k} \label{net_kalman_gain}\\
(P^{t_k^+})_{ml} &=& (P^{t_k^-})_{ml} - \frac{(P_{mj}^{t_k^-} - P_{mi}^{t_k^-})(P_{jl}^{t_k^-} -
P_{il}^{t_k^-})}{c_k},\label{meas_cov}
\end{eqnarray}
where $c_k := P_{ii}^{t_k^-} + P_{ij}^{t_k^-} - 2P_{ij}^{t_k^-}+ \sigma_{ij}^2(t_k)$. It is uniformly asymptotically
stable with uniformly bounded covariance.
\end{thm}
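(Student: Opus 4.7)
The plan is to recognize the problem as a standard continuous-discrete Kalman-Bucy filtering instance for a linear Gaussian system, and then to read off each claim from well-known properties of such filters. Concretely, the state process $X_t = (X_1(t),\ldots,X_n(t))^T$ satisfies the linear SDE $dX_t = -\alpha X_t\,dt + E^{1/2} dW_t$ where $W_t$ is an $n$-dimensional standard Brownian motion, and the scalar measurements at times $t_k$ have the linear Gaussian form (\ref{measurements}). Since all noise sources are Gaussian and everything in sight is linear, the conditional law of $X_t$ given the measurement $\sigma$-algebra $\mathcal{Y}_t$ is Gaussian with a covariance that does not depend on the observed values; this yields $P_t^t = P_t$ directly.

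Next I would derive the between-measurement equations (\ref{state_est2}) and (\ref{cov_X2}). Taking conditional expectation of the state SDE, and using that the driving Brownian motion increments after $t$ are independent of $\mathcal{Y}_t$, gives $d\hat{X}_t = -\alpha \hat{X}_t\,dt$. Applying It\^o's formula to $(X_t - \hat{X}_t)(X_t - \hat{X}_t)^T$ and taking conditional expectations, the cross terms vanish and the quadratic variation of $E^{1/2}dW_t$ contributes $E^2\,dt$, giving $\frac{dP_t}{dt} = -2\alpha P_t + E^2$. At a measurement, plugging the observation vector $H = M(t_k)^T$ into the standard Kalman update yields innovation $y_{ij}(t_k) - \hat{X}_j(t_k^-) + \hat{X}_i(t_k^-)$, Kalman gain $P_{t_k^-}M(t_k)/(M(t_k)^T P_{t_k^-} M(t_k) + \sigma^2_{ij}(t_k))$, whose $m$-th entry is exactly $(P_{mj}^{t_k^-} - P_{mi}^{t_k^-})/c_k$ with $c_k = P_{jj}^{t_k^-} + P_{ii}^{t_k^-} - 2P_{ij}^{t_k^-} + \sigma^2_{ij}(t_k)$; this reproduces (\ref{meas_state})--(\ref{meas_cov}).

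For uniform asymptotic stability I would use the Lyapunov function $V(x) = \tfrac{1}{2}\|x\|^2$. Between measurements, $\dot{V}(\hat{X}_t) = -\alpha\|\hat{X}_t\|^2$, so $V$ decays exponentially at a rate independent of the (random) inter-measurement times. At a measurement the update is a contraction of $\hat{X}$ along the direction $M(t_k)$ plus a bounded (in $\mathcal{Y}_{t_k}$-conditional $L^2$) innovation term, and convexity in the gain, together with the uniform bound $\sup_k\sigma^2_{ij}(t_k) = \Sigma^2 < \infty$ (inherited from the pairwise setting), controls the added mass. For bounded covariance, the key observation is that without measurement updates the Lyapunov equation $\frac{dP_t}{dt} = -2\alpha P_t + E^2$ has unique bounded solution $P_t = \frac{E^2}{2\alpha}(I - e^{-2\alpha t}) \preceq \frac{E^2}{2\alpha}$, and the measurement update $P_{t_k^+} = P_{t_k^-} - \frac{P_{t_k^-} M M^T P_{t_k^-}}{M^T P_{t_k^-} M + \sigma^2_{ij}(t_k)}$ is a rank-one positive semidefinite subtraction, so $P_{t_k^+} \preceq P_{t_k^-}$. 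Induction on the number of measurements combined with monotone comparison for the Lyapunov ODE between measurements yields $P_t \preceq \frac{E^2}{2\alpha}$ uniformly in $t$.

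The main obstacle I anticipate is not any single calculation but rather the bookkeeping required to confirm that the classical continuous-discrete Kalman-Bucy identities apply cleanly in our asynchronous setting where measurement times are possibly state-dependent or random. I would handle this by noting that the sampling instants $\{t_k\}$ are $\mathcal{Y}_t$-adapted and that, conditional on the sequence of measurement times, the problem reduces to the standard linear Gaussian filter; the joint Gaussianity of state and measurements is then preserved under this conditioning, and all stated formulas follow. Everything else is essentially bookkeeping on top of Theorem \ref{filter_variance_thm} and the standard derivation in \cite{jaswinski}.
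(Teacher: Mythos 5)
Your proposal is correct and takes essentially the same route as the paper, which offers no separate proof and simply presents the theorem as the standard continuous-discrete Kalman--Bucy filter of \cite{jaswinski} specialized to the observation vector $M(t_k)$ — exactly what you derive, including the correct $c_k = P_{ii}^{t_k^-}+P_{jj}^{t_k^-}-2P_{ij}^{t_k^-}+\sigma^2_{ij}(t_k)$ (which silently fixes a typo in the paper's statement) and the bounded-covariance argument via the PSD rank-one subtraction plus the Lyapunov ODE comparison. One small slip to note: the quadratic variation of $E^{1/2}\,dW_t$ contributes $E\,dt$, not $E^2\,dt$; the intended diffusion contribution $\mathrm{diag}(\ep_1^2,\dots,\ep_n^2)\,dt$ is right, but the paper's own definition $E=\mathrm{diag}(\ep_1^2,\dots,\ep_n^2)$ combined with the $E^2$ term in the covariance ODE is internally inconsistent, and your write-up inherits that inconsistency rather than resolving it.
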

It is straightforward to extend this scheme to account for the case that two or more measurements are taken at the same
time instant $t_k$.

Nodal skew estimates $\hat{a}_{i}$ and relative skew estimates $\hat{a}_{ij}$ can be obtained by means of
\begin{eqnarray}
\hat{a}_i(t) &=& c_i(t) e^{\hat{X}_i(t) + \frac{1}{2}P_{ii}^t(t)}, \label{nodal_skew_est_net}\\
\hat{a}_{ij}(t) &=& c_{ij}(t) e^{\hat{X}_j(t) - \hat{X}_i(t) + \frac{1}{2}(P_{ii}^t(t) + P_{jj}^t(t)
-2P_{ij}^t(t))}\label{rel_skew_est_net}.
\end{eqnarray}

\subsection{Protocol considerations} \label{protocols}

The optimal state estimator \emph{cannot} be implemented as is in a decentralized fashion. Except for the implementation
issues discussed in Section \ref{implementation}, (\ref{state_est2}) and (\ref{cov_X2}) can be implemented in a
decentralized fashion such that, for instance, node $i$ updates the state estimate $(\hat{X}_t)$ and the $i-$th row of
the covariance matrix $P_t$. However, the same is not true at a measurement. In fact, for a connected graph (the
situation considered here) \emph{all} entries may be non-zero, which, in turn, implies,  cf. (\ref{meas_state}),
(\ref{meas_cov}), that the filter updates at a measurement cannot be run in a decentralized fashion. The rationale for
this is as follows: Just before the first measurement, the covariance matrix is diagonal, and after the first
measurement, say at link $(i,j)$, all entries $p_{kl}$, where $k \hbox{ or } l \in {i,j}$ are updated. Even if we assume
that the covariance matrix is local (i.e., $p_{ml} = 0$, if $(m,l)\not \in E$) then this won't be necessarily the case
after a measurement at link $(i,j)$ since, as is evident from (\ref{meas_cov}), the entry $p_{ml}$ will be updated even
if $(m,l)\not \in E$, provided that $m,l\in \mathcal{N}_i\cup\mathcal{N}_j$. In particular, nodes that are two hops away
will update their corresponding covariance entry. In fact, for a connected graph, there exists a series of measurements
after which all entries of the covariance matrix and the state estimator are updated at a measurement in a link.

Therefore, the optimal state estimation algorithm is, by nature, centralized and there would need to be some message
passing in order for all nodes to agree on the error covariance matrix which is used in determining the Kalman gains.

Another issue is that relative skew estimates do not have the symmetry property, i.e., $\hat{a}_{ij} \ne
\frac{1}{\hat{a}_{ji}}$ in general as is evident from (\ref{skew_estimator_formula}); in fact, from
(\ref{skew_estimator_formula}) and Remark \ref{antisymmetry}, it follows that
\begin{equation}
\hat{a}_{ij}(t)\hat{a}_{ji}(t) = e^{P_{ij}^t(t)} \ge 1.
\end{equation}

\subsection{A distributed filter for skew estimation}\label{distr_imp}

We now describe a distributed suboptimal filter for network-wide estimation of skews. In order to design a distributed
algorithm, we impose a constraint on the structure of the gain $K(t_k)$. In particular, on a measurement at link
$(i,j)$, we allow $(K(t_k))_m \ne 0$ only if $m\in\{i,j\}$, which implies the constraint that only the state estimates
$\hat{X}_i,\hat{X}_j$ will be updated. Then,
\begin{equation}
P^{t_k^+} = (I - K(t_k)M(t_k)^T)P^{t_k^-}(I - K(t_k)M(t_k)^T)^T + \sigma_{ij}^2(t_k)K(t_k)K(t_k)^T.
\end{equation}
where $M(t_k)$ is as in (\ref{M}). The elements of the covariance matrix are updated as follows: If $m,l \not \in
\{i,j\}$ then $p^+_{ml} = p^-_{ml}$. If $m \not \in \{i,j\}$, then
\begin{eqnarray} \label{upd1}
P^+_{mi} &=& (1+k_i)P^-_{mi} - k_iP^-_{mj},\\
P^+_{mj} &=& k_jP^-_{mi} +(1- k_j)P^-_{mj}.
\end{eqnarray}
Finally
\begin{eqnarray} \label{upd2}
P^+_{ii} = (1+k_i)((1+k_i)P^-_{ii} - k_iP^-_{ij}) - k_i( (1+k_i)P^-_{ij} - k_iP^-_{jj}) + k_i^2\sigma_{ij}^2,\\
P^+_{ij} = (1+k_i)(k_jP^-_{ii} +(1- k_j) P^-_{ij}) - k_i( k_jP^-_{ij} +(1-k_j)P^-_{jj}) + k_ik_j\sigma_{ij}^2,\\
P^+_{jj} = k_j(k_jP^-_{ii} + (1-k_j)P^-_{ij}) + (1-k_j) (k_jP^-_{ij} + (1-k_j)P^-_{jj}) + k_j^2\sigma_{ij}^2,
\end{eqnarray}
where $k_i,k_j$ are the $i-$th and $j-th$ entries of $K(t_k)$. Note that the only diagonal entries of $P$ that are
updated are $P^+_{ii},P^+_{jj}$. So minimizing the trace over $k_i, k_j$ boils down to minimizing $P^+_{ii}$ over $k_i$,
and $P^+_{jj}$ over $k_j$. This yields $k_i = -\frac{b_i}{2a}$, $k_j = -\frac{b_j}{2a}$ for
\begin{eqnarray}
a &:=& P^-_{ii} + P^-_{jj} - 2P^-_{ij} + \sigma_{ij}^2 > 0,\\
b_i &:=&2(P^-_{ii} - P^-_{ij}),\\
b_j &:=&2(P^-_{ij} - P^-_{jj}),
\end{eqnarray}
from which it follows that $p_{ii},p_{jj}$ decrease at a measurement, i.e.,
\begin{eqnarray}
P^+_{ii} = -\frac{b_i^2}{4a} + P^-_{ii}, \label{pii}\\
P^+_{jj} = -\frac{b_j^2}{4a} + P^-_{jj}.
\end{eqnarray}

\begin{rem}
From (\ref{upd1}), it follows that $P_{ml}$ might be updated even if $(m,l) \not \in E$ , as was the case in
the
centralized filter. However, this does not constitute a problem for implementation, because the gains depend only on
entries of the covariance matrix which are either diagonal or correspond to links.
\end{rem}

\begin{rem}
It follows from (\ref{pii}), that $P_{ii}^+ = \frac{[P_{ii}^-(P_{jj}^- + \sigma_{ij}^2)] - (P_{ij}^-)^2}{(P_{ii}^- +
P_{jj}^- + \sigma_{ij}^2) - 2P_{ij}^-}$. If we assume that $P_{ii}^+ \le \frac{P_{ii}^-(P_{jj}^- +
\sigma_{ij}^2)}{P_{ii}^- + P_{jj}^- + \sigma_{ij}^2 }$, e.g. if $P_{ij}^- \le 0$, or $P_{ij}^- \approx 0$ then we are in
the setup of Theorem \ref{filter_variance_thm} and we can apply the analysis therein to derive variance bounds by
substituting $\Sigma^2$ by $P_{j} + \Sigma^2$ where $P_{j}, \Sigma^2$ denote upper bounds on $P_{jj},
\sigma_{ij}^2(t_k)$.
\end{rem}

\section{Offset estimation} \label{offset_estimation_sec}

We have studied the problem of optimal estimation of the state (logarithms of skews) and the skews. However, clock
synchronization ultimately amounts to estimating nodal offsets with respect to a reference clock \cite{nfrer_TAC}.
Developing a continuous-discrete filter for the estimation of $\tau_i(t) - t$ is problematic because of the dependency
of the time displays $\tau_i(t)$ on the skew $a_i(t)$ in (\ref{disp_eq}). Under the assumption of unknown delays, it was
shown in \cite{scott} that the relative offset between two clocks cannot be estimated unless delays are assumed to be
symmetric, or have a known affine characterization of asymmetry \cite{nfrer_TAC}. For a link $(i,j)$, we will assume that
delays, as measured in reference clock units, are symmetric. 
We will use two packets, one sent from node $i$ to node $j$, and the other sent from node $j$ to node $i$, together with
an estimate of the pairwise skew (as obtained by the pairwise filters in section \ref{pairwise_sync}) in order to make
an estimate of the relative offset $\tau_{ij} := \tau_j - \tau_i$. The estimate can be obtained in a similar way as done
in \cite{nfrer_TAC}, given an estimate of the relative skews $\hat{a}_{ij}(k),\hat{a}_{ji}(k)$ (cf. (\ref{relative_skew})),
by ignoring skew variations between the sent times of the two packets. From Figure \ref{ping2} and using the notation
therein we get:
\begin{eqnarray}
r_{ij}(k) &=& s_i(k) + \tau_{ij}(t_1) + d_{ij},\\
r_{ji}(k) &=& s_j(k) + \tau_{ji}(t_3) + d_{ji},\\
\tau_{ij}(t_3) &=& \tau_{ij}(t_1) + (s_j(k) - r_{ij}(k) + d_{ij})(1 - \hat{a}_{ji}(k)),\\
\tau_{ij}(t_4) &=& \tau_{ij}(t_1) + (r_{ji}(k) - s_i(k))(\hat{a}_{ij}(k) - 1),
\end{eqnarray}
where $d_{ij},d_{ji}$ are the delays as measured by the local clocks, $j,i$ respectively. Using $\tau_{ji}(t_3) =
-\tau_{ij}(t_3)$ and $d_{ij} = \hat{a}_{ij}d_{ji}$ we get:
\begin{eqnarray}
\hat{\tau}_{ij}(k) &=& -r^{(k)}_{ji} + (s^{(k)}_j + \hat{a}_{ij}\hat{d}_{ji}),\label{offset_est}\\
\hat{d}_{ij}(k) &:=&  \frac{1}{2\hat{a}_{ji}}[(r^{(k)}_{ji} - s^{(k)}_j) + (r^{(k)}_{ij}-s^{(k)}_i) + (s^{(k)}_j -
r^{(k)}_{ij})(1-\hat{a}_{ji})].\label{delay_est}
\end{eqnarray}

\begin{figure}
   \centering
 \includegraphics[totalheight=0.2\textheight]{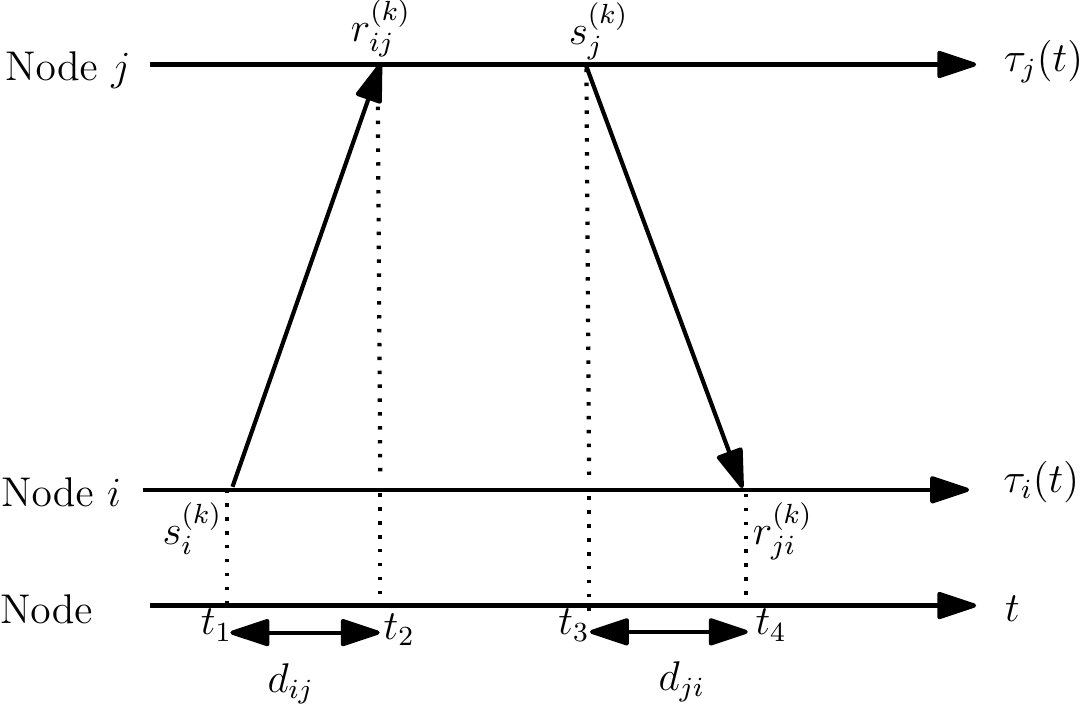}
  \caption{\small Message exchanges between two nodes for offset estimation}\label{ping2}
\end{figure}

An analysis of (\ref{offset_est}) based on stochastic Taylor approximation
carried out in a similar way as was done in section \ref{meas_sec}, so as to bound the approximation error.

Using this scheme, pairwise estimates of the relative offsets can be obtained. Then, the spatial smoothing algorithm
\cite{solis,arvind} presented in section \ref{smoothing} can be used for the offline smoothing of the pairwise
estimates, since relative offsets also satisfy constraints of the form $\tau_{ij} = \tau_j - \tau_i$.

\subsection{Performance evaluation of clock synchronization algorithms}\label{predict}

Because of link delays, it is impossible for a node to have instantaneous access to another node's clock display, and we
need the scheme of the previous section to estimate relative offsets. The relative offset estimates cannot be used as a
practical metric for clock synchronization performance. A metric of performance can be derived based on the fact that
(cf. Figure \ref{ping})
\begin{equation}
\bar{a}_{ij}(t_k) = |\frac{r_{i,j}^{(k+1)} - r_{i,j}^{(k)}}{s_i^{(k+1)} - s_i^{(k)}}|,
\end{equation}
if the link delays of the two packets (as measured in clock $i$'s units) are assumed to be the same constant. We use the
$\bar{a}_{ij}(t_k)$ to denote the average relative skew in the interval $[s_i^{(k)},s_i^{(k+1)}]$ of clock $i$. Node $i$
can predict the receipt time of the second packet, $s_i^{(k+1)}$, using $s_i^{(k)}, r_{i,j}^{(k)}, s_i^{(k+1)}$ and an
estimate of $\bar{a}_{ij}(t_k)$. This can be then compared to the actual receipt time to obtain a metric of clock
synchronization.

\section{Model-based Clock synchronization protocol (MBCSP)} \label{MBCSP}

In this section, we combine our previous analysis to present the specifications of a proposed model-based distributed
clock synchronization protocol. We consider the same abstraction regarding the network topology as in Section
\ref{net_synch_sec}.

Nodes can exchange time-stamped packets with their neighbors in an asynchronous fashion. We consider two modes of
communication:
\begin{enumerate}
  \item \textbf{Skew estimation:} \ 
A node $i$ sends two packets with minimal time separation to a neighboring node $j$ as
explained in Section \ref{meas_sec} and depicted in Figure \ref{ping}. Node $i$ includes its parameter $\ep_i$ in the
second packet sent to node $j$, along with its state estimate $\hat{X}_i$ and $p_{ii},p_{ij}$.
  \item \textbf{Offset estimation:} \ 
A node $i$ sends a packets to node $j$. Upon receipt, node $j$ sends a packet to node $i$.
Node $i$ includes its parameter $\ep_i$,  along with its state estimate $\hat{X}_i$ and $p_{ii},p_{ij}$ in the first
packet sent to node $j$. This is depicted in Figure \ref{ping2}.
\end{enumerate}
In both modes, the sending node time-stamps and includes in the packet the time (according to its local clock) that it
sends a packet while the receiving node time-stamps the time (according to its local clock) that it receives a packet.

An arbitrary node in the network, say node $i$, is required to store parameters $\alpha, \ep_i$, as well as
\begin{enumerate}[(i)]
  \item Its state estimate $\hat{X}_i$.
  \item The $i$-th row of the covariance matrix, $\{P_{ij}\}_{j=1}^{n}$.
  \item Its nodal offset estimate $\widehat{\tau_i - t}$.
  \item For any neighboring node, say node $j$, the relative offset estimate $\hat{\tau_{ij}}$.
  \item The last time, according to its local clock, that it performed a state estimate update (along with an update of
the $i-$th row of the covariance matrix), based on either (\ref{state_est2}), (\ref{cov_X2}) or (\ref{meas_state}) and
the covariance update equations of Section \ref{distr_imp}. We denote this time by $u_i$.
\end{enumerate}

Each node, is responsible for calling four routines:
\begin{enumerate}
  \item \textbf{Skew update:} \ Upon receipt of the second packet in the \emph{skew estimation} mode, say from node $i$ to node $j$
(see Figure \ref{ping}), node $j$ has all four time-stamps $s_i^{(k)}, s_i^{(k+1)}, r_{i,j}^{(k)}, r_{i,j}^{(k+1)}$, as
well as $\alpha,\ep_i,\ep_j$. Node $j$ makes a measurement according to (\ref{link_meas}); if node $i$ is node 0 then
$t_k = s_i^{(k)}$, otherwise $r_{i,j}^{(k+1)}$ can be used instead. Before node $i$ sends the second packet, it updates
its state estimate along with the $i-$th row of the covariance matrix according to
      \begin{eqnarray}
      \hat{X}_i^{t_{k+1}^-}(t_{k+1}) &=& e^{-\alpha\Delta t_k} \hat{X}_i^{t_k^+}(t_k),\label{st1}\\
      P_{ij}^{t_k^-}(t_k) &=& e^{-2\alpha\Delta t_k} P_{ij}^{t_{k-1}^+}(t_{k-1}) + \frac{\ep^2_j}{2\alpha}(1 -
e^{-2\alpha\Delta t_k}) \mathbb{I}_{j=i}\label{c1},
      \end{eqnarray}
      where $\mathbb{I}$ denotes the characteristic, function and $\Delta t_k:=t_{k+1} - t_k$; it can be approximated by
the difference $s_i^{(k+1)} - u_i$. Node $i$ includes its updated state $\hat{X}_i$ as well as the updated values for
$P_{ii}, P_{ij}$. Finally, node $i$ sets $u_i = s_i^{(k+1)}$. Upon receipt of the second packet, node $j$ updates its
state estimate along with the $j-$th row of the covariance matrix according to (\ref{st1}), (\ref{c1}) where now $\Delta
t_k$ can be approximated with $r_{i,j}^{(k+1)} - u_j$, and then $u_j$ is updated to $r_{i,j}^{(k+1)}$.
      Then, node $j$ calculates the state and covariance updates after the measurement based on (\ref{meas_state}),
along with the covariance update equations of Section \ref{distr_imp}. Node $j$ can send an ACK packet to node $i$ and
$i$ performs the exact same tasks as $j$.

  \item \textbf{Relative offset update:} \  Before the sender, say node $i$, sends the first packet in the \emph{offset estimation}
mode to a receiver, say node $j$, (see Figure \ref{ping2}), node $i$ updates its state estimate $\hat{X}_i$ and the
$i-$th row of the covariance matrix according to (\ref{st1}), (\ref{c1}) where now $\Delta t_k$ can be approximated with
$s_i^{(k)} - u_i$, and then $u_i$ is updated to $s_i^{(k)}$. Node $i$ includes in this packet $\hat{X}_i, P_{ii},
P_{ij}$. Upon receipt of this packet, node $j$ also updates its state estimate $\hat{X}_j$ and the $j-$th row of the
covariance matrix according to (\ref{st1}), (\ref{c1}) where now $\Delta t_k$ can be approximated with $r_{i,j}^{(k)} -
u_j$, and then $u_j$ is updated to $r_{i,j}^{(k)}$. Then, node $j$ can estimate $\hat{a}_{ij},\hat{a}_{ji}$ based on
(\ref{rel_skew_est_net}), where $t$ is approximated by $s_i^{(k)}$ and $r_{i,j}^{(k)}$, respectively. A symmetric
estimate can then be calculated by $\hat{a}_{ij} \leftarrow \sqrt{\frac{\hat{a}_{ij}}{\hat{a}_{ji}}}$, and
$\hat{a}_{ji} = \frac{1}{\hat{a}_{ij}}$.

      Upon receipt of the second packet in the \emph{offset estimation} mode, from node $j$ to node $i$ (see Figure
\ref{ping2}), node $i$ has all four time-stamps $s_i^{(k)}, r_{i,j}^{(k)}, s_j^{(k)}, r_{j,i}^{(k)}$ along with the
estimate $\hat{a}_{ij}$, so it can perform offset and delay estimation according to (\ref{offset_est}) and
(\ref{delay_est}). The delay estimate needs to be non-negative so we set $\hat{d}_{ij} = max(\hat{d}_{ij},0)$, and
$\hat{d}_{ij} = \hat{a}_{ij}\hat{d}_{ji}$. Node $i$ can send an ACK packet to node $j$ in order to agree on an estimate
$\hat{\tau}_{ji}(k) = - \hat{\tau}_{ij}(k)$.

  \item \textbf{Spatial smoothing:} \ Upon the completion of the \emph{relative offset update}, nodes $i$ and $j$ perform a spatial
smoothing update based on (\ref{spatial_smoothing}) where $v_i$ represents $\widehat{\tau_i - t}$ and $\hat{X}_{ji}$
represents $\hat{\tau}_{ji}$.

  \item \textbf{Time prediction:} \ Upon completion of the \emph{skew update} task (after the receipt of the second packet in the
\emph{skew estimation} mode, say from node $i$ to node $j$ (see Figure \ref{ping}) ) node $i$ can compute the predicted
receipt time of the second packet ,$r_{i,j}^{(k+1)}$, if the ACK packet contains $r_{i,j}^{(k+1)}$ based on the
discussion in Section \ref{predict}. This requires a calculation of $\hat{a}_{ij}$, which can be performed in the exact
same way as was done in \emph{relative offset update}. Node $i$ can then obtain the difference of the predicted receipt
time to the actual value $r_{i,j}^{(k+1)}$. The same difference is computable by node $j$.
\end{enumerate}

We refer to this protocol as Model-based Clock synchronization protocol (MBCSP). We have implemented MBCSP in
Matlab with the graph abstraction for the network topology (cf. Section \ref{meas_sec}). In order to make the scenario
realistic with the interference in wireless networks, we have also implemented a simple Medium Access Control (MAC)
\cite{robert} mechanism, where two interfering transmissions collide and packets get dropped. Without serious loss of
generality, we use the primary interference model \cite{robert}, in which a node cannot be transmitting a packet to  more
than one node and cannot be receiving a packet by more than one node. In addition, a node is not allowed to be
transmitting if it is receiving a packet. These constraints require the activation set, defined as the set of links
activated at each time, to be a matching \cite{graph}.

We test MBCSP against the protocol of \cite{solis}, which will be hereafter referred to as Spatial Smoothing (SS). In
this protocol, relative skews are calculated based on only link relative skew measurements using exponential forgetting.
Nodal skew estimates are obtained using spatial smoothing as explained in Section \ref{smoothing}. Relative offset
estimates are obtained using the formulas in Section \ref{offset_estimation_sec}, and are consequently smoothed using
spatial smoothing to obtain nodal offset estimates.

We have also implemented a protocol that performs relative skew estimation using the pairwise filter of
Section \ref{pairwise_sync} instead of the distributed network filter of Section \ref{distr_imp}. Nodal skew estimation
is accomplished using the spatial-smoothing algorithm (cf. Section \ref{smoothing}) while offset estimation is
carried out in the exact same way as MBCSP. We refer to this as \emph{Hybrid} protocol. Its performance is
intuitively expected to lie in between that of SS and MBCSP, which we validate in the next section.

\section{Simulations}\label{sec:simulations}

we present simulation results of the model properties and the performance of the clock synchronization
protocol.

In Figure \ref{sim1} we present a simulation of the skew and time display of a clock with $\alpha_i = 10, \ep_i = 1$ for
30 reference time units, using Matlab. The fluctuations in the instantaneous skew give rise to a time-varying offset.

\begin{figure}
  \centering
  \includegraphics[totalheight=0.3\textheight]{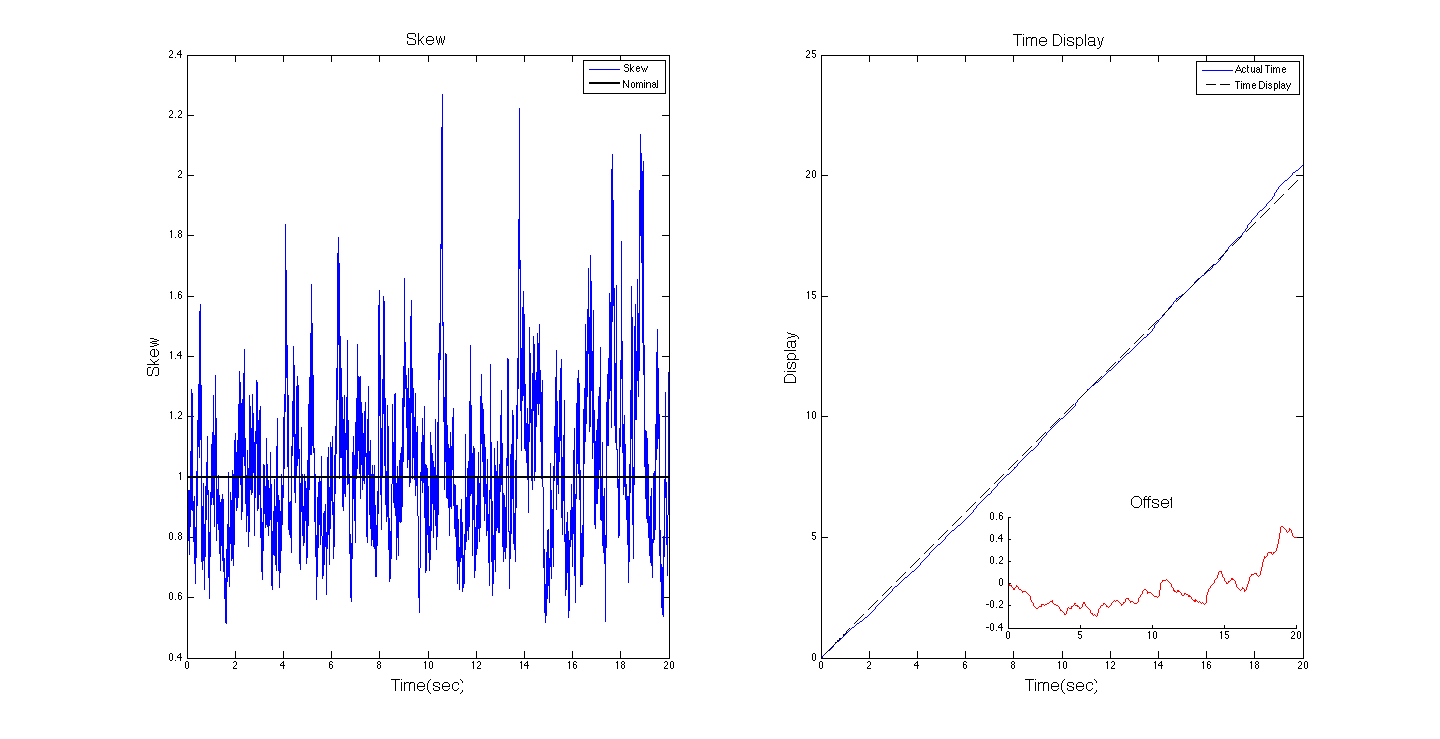}
  \caption{\small Instantaneous skew and time display for a clock with $\frac{\alpha_i}{\ep_i^2} = 10$.}\label{sim1}
\end{figure}

In Figure \ref{sim1} we present a simulation of the clock display variance (cf. (\ref{var_bound})) and its lower bound
(cf. (\ref{var_lb})) for two different clocks, with parameters $\alpha_i = 10, \ep_i = 1$ and $\alpha_i = 10, \ep_i =
10$ for 20 reference time units. The upper bound (cf. (\ref{vartime_bound})) is not displayed because it appears to
vastly overestimate the variance. Both the variance and its lower bound appear to grow \emph{linearly} with time; in
fact we applied a curve-fitting approach with a nominal curve $ax^b$, and the exponent was estimated close to $1$ in
both cases with small mean-square error (MSE).

\begin{figure}[ht]\label{variance_fig}
\centering
\subfigure{
\includegraphics[scale=0.5]{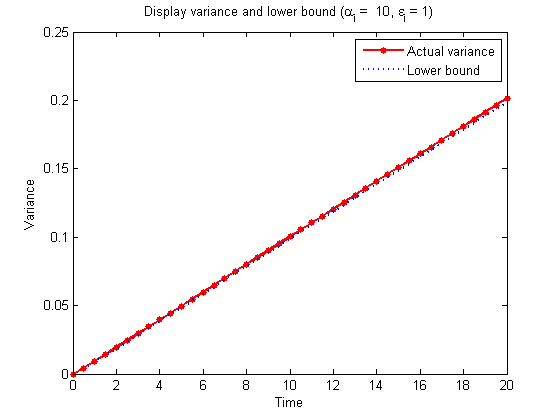}
}
\subfigure{
\includegraphics[scale=0.5]{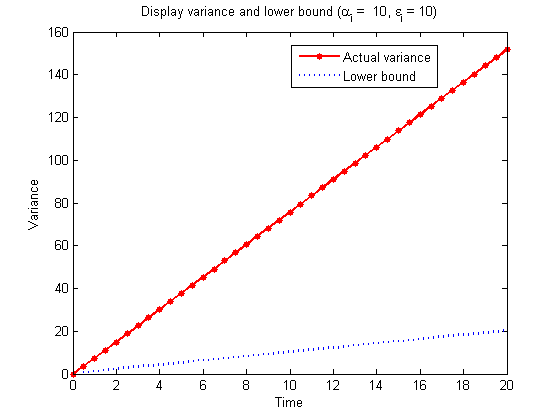}
}
\caption{\small Display variance and lower bound for two different clocks.}
\end{figure}

In Figure \ref{allan_sim} we illustrate the Allan variance of a clock with the same parameters as above and a comparison
with the performance index of Remark \ref{asd}. Unlike the index of Remark \ref{asd}, the Allan variance is not an
increasing function. However, the approximation is good for small values of $T$.

\begin{figure}
  \centering
  \includegraphics[totalheight=0.3\textheight]{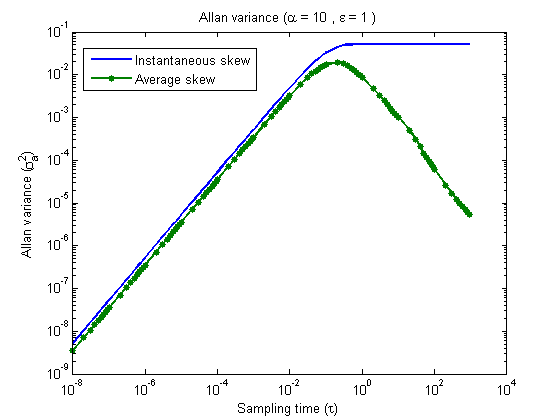}
  \caption{\small Allan variance vs asymptotic skew difference variance for a clock with $\frac{\alpha_i}{\ep_i^2} =
10$.}\label{allan_sim}
\end{figure}

In Figure \ref{allan_fit}, we present the measured Allan variance for a Berkeley mote clock \cite{solis}, as well as the
best fit for the Allan variance of our model (\ref{allan_model} using Matlab; the parameters corresponding to the best
fit were $\hat{\alpha}_i = 66.4, \hat{\ep}_i = 4.15\cdot10^{-5}$. It is evident that the model can only capture a decay
of the Allan variance with $\tau$ but is unable to capture the fact that Allan variance appears to increase for $\tau
\ge 60s$.
However, note that the scale is logarithmic and the average absolute error of the fit is $4.4659\cdot10^{-10}$.

\begin{figure}
  \centering
  \includegraphics[totalheight=0.3\textheight]{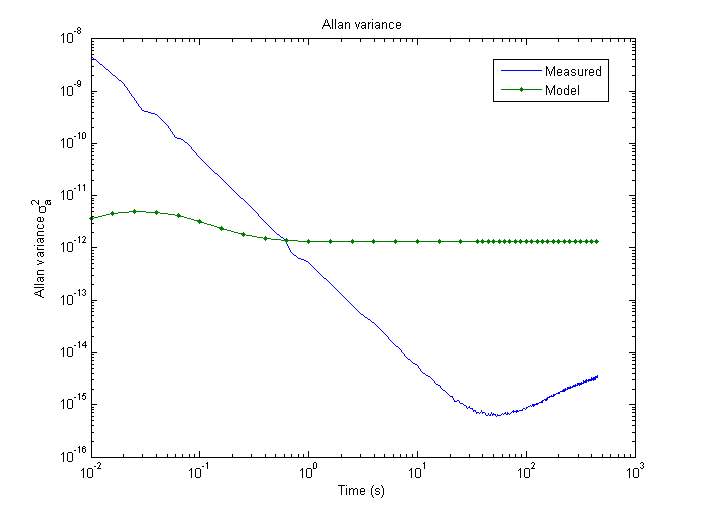}
  \caption{\small Parameter estimation using Allan variance.}\label{allan_fit}
\end{figure}

We have studied the performance degradation of the distributed filter of Section \ref{distr_imp} as opposed to the
optimal centralized Kalman filter (cf. Section \ref{net_synch_sec}) for various network topologies. In all cases, the
variance of the suboptimal filter was very close to the optimal variance. In Figure \ref{filt_var_fig}, we present the
average state variance (trace of the covariance measurement divided by the size of the state) for a linear network with
10 clocks with parameters $\alpha = 10, \ep=1$. Measurements are taken every $T = 0.002$ times units, and for each measurement,
one link is selected at random. The performance degradation is defined as the ratio of the distributed filter variance
to the variance of the optimal Kalman filter.

\begin{figure}
  \centering
  \includegraphics[totalheight=0.3\textheight]{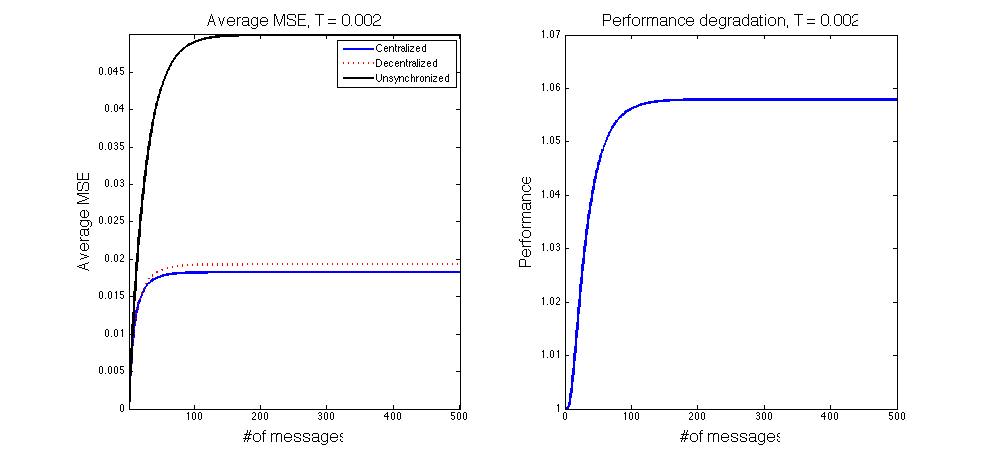}
  \caption{\small Performance degradation of distributed state estimation filter for a linear network with 10 nodes and
random periodic measurements.}\label{filt_var_fig}
\end{figure}

We have simulated the performance of MBCSP 
against both SS \cite{solis} and the Hybrid scheme defined in the previous section. The
results in Figure \ref{synch_sim} were obtained for 2 clocks and 5 clocks respectively. The clock parameters are set to
$\alpha=10,\ep=1$. The time precision, i.e., the sampling time of the numerical simulation (cf. remak \ref{OU_sim}), is
set to $\Delta t = 10^{-5}$. Delays are generated as random variables independent, uniformly distributed with mean
$500\Delta t$. For skew estimation, the two packets of Figure \ref{ping} are sent with a separation of $40 \Delta t$,
while the second packet for offset estimation (cf. Figure \ref{ping2}) is sent $20 \Delta t$ after the first one is
received. Measurements are performed at random times and random links with frequency $1|E|$ for skew estimation, and
$6|E|$ for offset estimation, where the unit time is defined as $\frac{1}{\Delta t}$ mini-slots, and $|E|$ is the number
of directed links in the network. We have conducted the simulations for $120$ time units and different realizations of
the white noises. The results of Figure \ref{synch_sim} illustrate mean absolute errors (MAE); we use three indices,
namely the nodal skew estimation MAE, nodal offset estimation MAE, and the MAE in predicting receipt times (cf. Section
\ref{predict}). The columns labeled as ``No sync'' are used to present the mean absolute real values of the
corresponding quantities.

\begin{figure*}[ht]


\subfigure{
\begin{scriptsize}
\begin{tabular}{|c| c c c c| c c c c| c c c|}
        \hline
        {\bfseries Node} & {\bfseries Offset} & & & & {\bfseries Skew} & & & & {\bfseries Prediction} & &  \\
	& No sync & SS & Hybrid  & MBCSP & No sync & SS & Hybrid & MBCSP & SS & Hybrid & MBCSP\\
        \hline
1 & 0.00000 & 0.00000 & 0.00000 & 0.00000 & 1.00000 & 0.00000 & 0.00000 & 0.00000 & 0.01657 & 0.01359 & 0.01364\\
2 & 0.57687 & 0.00131 & 0.00120 & 0.00119 & 1.00548 & 0.20866 & 0.16970 & 0.17014 & 0.01657 & 0.01359 & 0.01364\\
\hline
\end{tabular}
\end{scriptsize}
}

\subfigure{
\begin{scriptsize}
\begin{tabular}{|c| c c c c| c c c c| c c c|}
        \hline
        {\bfseries Node} & {\bfseries Offset} & & & & {\bfseries Skew} & & & & {\bfseries Prediction} & &  \\
	& No sync & SS & Hybrid  & MBCSP & No sync & SS & Hybrid & MBCSP & SS & Hybrid & MBCSP\\
        \hline
1 & 0.00000 & 0.00000 & 0.00000	& 0.00000 & 1.00000 & 0.00000 &	0.00000 & 0.00000 & 0.04115 & 0.02609 & 0.02655\\
2 & 0.60233 & 0.02310 & 0.02308 & 0.02307 & 1.02628 & 0.21036 & 0.18741 & 0.17734 & 0.04113 & 0.02610 & 0.02656\\
3 & 0.29349 & 0.02044 & 0.02038 & 0.02038 & 1.01466 & 0.22604 & 0.17117 & 0.16513 & 0.04111 & 0.02609 & 0.02655\\
4 & 1.71270 & 0.02049 & 0.02043 & 0.02043 & 0.98548 & 0.21835 & 0.16800 & 0.16808 & 0.04111 & 0.02610 & 0.02654\\
5 & 0.48743 & 0.02193 & 0.02193 & 0.02191 & 1.00192 & 0.21063 & 0.16870 & 0.16165 & 0.04115 & 0.02609 & 0.02654\\
\hline
\end{tabular}
\end{scriptsize}
}

\caption{\small Performance evaluation of clock synchronization algorithms based on data obtained from MATLAB
simulations.}\label{synch_sim}
\end{figure*}

We have performed numerous simulations for various parameters and network topologies. In all cases, we observed that
Hybrid and MBCSP track the clock skews significantly better than the ad-hoc exponential forgetting scheme of SS. This
results in more accurate relative offset estimates, delay estimates predicted receipt times. However, since both schemes
use the Spatial Smoothing algorithm (cf. Section \ref{smoothing}) to obtain nodal offset estimates from relative offset
estimates, their accuracy in nodal offset estimates is comparable. We also observed that the Hybrid
scheme performs significantly better than SS but worse than MBCSP\footnote{Note that in the case of only two nodes
(pairwise synchronization) the offest estimation accuracy is, trivially, the same.}.

We have obtained real clock data from two Berkeley motes \cite{solis} exchanging time-stamps in the two communication
modes described in Section \ref{MBCSP}, namely skew and offset estimation. Based on the time-stamp collection we
performed a trace-driven simulation \footnote{Note that trace-driven simulation is equivalent to an actual
implementation, since our protocols use only the acquired time-stamps and have minimal computational complexity.} to
obtain a comparative evaluation of the three scehmes. Clock parameters $\alpha_2,\ep_2$ were estimated
based on Allan variance, as shown above, to be $\hat{\alpha}_i = 66.4, \hat{\ep}_i = 4.15\cdot10^{-5}$. The accuracy
was set to $ 1 \mu s$. The results are presented in Figure \ref{synch_em}. In this case, we cannot define the real
skews and offsets so we present the offset and skew estimates obtained from the three different schemes, in mean
absolute value. However, we can still define the prediction MAE which is the metric of performance for
clock synchronization algorithms (cf. Section \ref{predict}).  The
prediction MAE is very low in all cases, in the order of tens of $\mu s$, which means that the receipt times are
precisely predicted within the clock accuracy of $1 \mu s$ in many cases. Our schemes yield a $45\%$ decrease in the
prediction MAE.

\begin{figure*}[ht]
\begin{scriptsize}
 \begin{tabular}{|c| c c c| c c c| c c c|}
        \hline
        {\bfseries Node} & {\bfseries Offset} & & & {\bfseries Skew} & & & {\bfseries Prediction} & &  \\
	& SS & Hybrid  & MBCSP & SS & Hybrid & MBCSP & SS & Hybrid & MBCSP\\
        \hline
1 & 0.00000  & 0.00000 & 0.00000 & 1.00000 & 1.00000 & 1.00000 & 6.74515e-07 & 3.84218e-07 &
3.84218e-07\\	
2 & 453.82502 & 453.82502 & 453.82502 & 0.99996 & 1.00000 & 1.00000 & 6.74599e-07 & 3.84207e-07 & 3.842070e-07	\\
\hline
\end{tabular}
\end{scriptsize}

\caption{Trace-driven simulation of clock synchronization algorithms based on data obtained from Berkeley motes.}\label{synch_em}
\end{figure*} 
\section{Conclusion}

We have developed a mathematical model for the skews and the time displays of different clocks and analyzed its
properties. The instantaneous skews given by the model have expected value 1 at all times, while their variance is
bounded. Additionally, time displays are unbiased, but, nonetheless, their variance grows with time, which makes the
synchronization problem challenging. We have calculated the Allan variance \cite{allan} of the model. It was shown that
if a different clock is taken as reference, the time displays of all other clocks can be expressed with respect to it
using the same model, with different parameters and a change of time scales. We have developed and analyzed a method to
obtain noisy state measurements on a link, and used these measurements to develop a continuous-discrete Kalman-Bucy
filter for the pairwise estimation of the logarithm of skews. The differential equations of the pairwise estimator are
not readily implementable since they require integration with respect to the unknown reference time, so we have proposed
an implementable stable filter which has uniform bounded unconditional variance. The analysis of the pairwise estimation
was applied to handle the network-wide state estimation in three ways. First, an off-line algorithm for the filtering of
pairwise estimates was proposed, and it was shown that using the distributed scheme of \cite{solis}, \cite{arvind} for
the smoothing of pairwise estimates is optimal for a particular selection of error criterion. In addition, the optimal
linear filtering equations were derived for the network case, which give rise to an online asynchronous centralized
scheme which is stable and of bounded variance. We have also described an efficient distributed suboptimal scheme.
Finally, we have presented a scheme to estimate relative offset estimates based on estimates of the relative skews, and
suggested the spatial smoothing algorithm of \cite{solis} to obtain nodal offset estimates from relative offset
estimates. We have implemented our protocol in Matlab and have conducted a simulation study that shows increase of
performance compared to \cite{solis} for the model under study. We have also performed trace-driven simulation based on
time-stamps obtained by Berkeley motes. Our scheme outperforms the one in \cite{solis} by $45\%$, where we used the
accuracy in predicting receipt time-stamps as synchronization metric.

%
%
\bibliographystyle{IEEEtran}
\bibliography{references}

\begin{thebibliography}{10}
\providecommand{\url}[1]{#1}
\csname url@samestyle\endcsname
\providecommand{\newblock}{\relax}
\providecommand{\bibinfo}[2]{#2}
\providecommand{\BIBentrySTDinterwordspacing}{\spaceskip=0pt\relax}
\providecommand{\BIBentryALTinterwordstretchfactor}{4}
\providecommand{\BIBentryALTinterwordspacing}{\spaceskip=\fontdimen2\font plus
\BIBentryALTinterwordstretchfactor\fontdimen3\font minus
  \fontdimen4\font\relax}
\providecommand{\BIBforeignlanguage}[2]{{%
\expandafter\ifx\csname l@#1\endcsname\relax
\typeout{** WARNING: IEEEtran.bst: No hyphenation pattern has been}%
\typeout{** loaded for the language `#1'. Using the pattern for}%
\typeout{** the default language instead.}%
\else
\language=\csname l@#1\endcsname
\fi
#2}}
\providecommand{\BIBdecl}{\relax}
\BIBdecl

\bibitem{hemant_nick}
N.~Freris, H.~Kowshik, and P.~R. Kumar, ``{Fundamentals of Large Sensor
  Networks: Connectivity, Capacity, Clocks and Computation},''
  \emph{Proceedings of the IEEE}, vol.~98, no.~1, pp. 1828--1846, Nov. 2010.

\bibitem{nfrer_TAC}
N.~Freris, S.~Graham, and P.~R. Kumar, ``{Fundamental Limits on Synchronizing
  Clocks over Networks},'' \emph{IEEE Transactions on Automatic Control},
  vol.~56, no.~4, pp. 1352--1364, Jun. 2011.

\bibitem{avg_consensus}
Q.~Li and D.~Rus, ``Global clock synchronization in sensor networks,''
  \emph{IEEE Transactions on Computers}, vol.~55, pp. 214--226, 2006.

\bibitem{convergence}
S.~Graham and P.~R. Kumar, ``{The Convergence of Control, Communication, and
  Computation},'' in \emph{Proceedings of Personal Wireless
  Communication}.\hskip 1em plus 0.5em minus 0.4em\relax Heidelberg:
  Springer-Verlag, Oct. 2003, pp. 458--475.

\bibitem{plarre}
K.~Plarre and P.~R. Kumar, ``Tracking objects with networked scattered
  directional sensors,'' \emph{EURASIP Journal on Advances in Signal
  Processing}, vol. 2008, 2008, article ID: 360912, 10 pages.

\bibitem{robert}
R.~McCabe, N.~Freris, and P.~R. Kumar, ``{Controlled Random Access MAC for
  Network Utility Maximization in Wireless Networks},'' in \emph{Proceedings of
  the 47th IEEE Conference on Decision and Control}, Cancun, Dec. 2008, pp.
  2350--2355.

\bibitem{sn_survey}
F.~Sivrikaya and B.~Yener, ``{Time Synchronization in Sensor Networks: a
  survey},'' \emph{IEEE Network}, vol.~18, no.~4, pp. 45--50, Jul.-Aug. 2004.

\bibitem{TPSN}
S.~Ganeriwal, R.~Kumar, and M.~Srivastava, ``{Timing-Sync Protocol for Sensor
  Networks},'' in \emph{Proceedings of the 1st Conference on Embedded Networked
  Sensor System (SenSys)}.\hskip 1em plus 0.5em minus 0.4em\relax ACM, Nov.
  2003, pp. 138--149.

\bibitem{nfrer_algocdc}
N.~Freris, V.~Borkar, and P.~R. Kumar, ``{A model-based approach to clock
  synchronization},'' in \emph{Proceedings of the 48th IEEE Conference on
  Decision and Control}, Sanghai, Dec. 2009, pp. 5744--5749.

\bibitem{veitch}
D.~Veitch, S.~Babu, and A.~Pasztor, ``{Robust Synchronization of Software
  Clocks across the Internet},'' in \emph{Proceedings of the 4th SIGCOMM
  conference on Internet measurement}.\hskip 1em plus 0.5em minus 0.4em\relax
  Sicily: ACM, Oct. 2004, pp. 25--27.

\bibitem{allan}
D.~W. Allan, ``Time and frequency (time-domain) characterization, estimation,
  and prediction of precision clocks and oscillators,'' \emph{IEEE Transactions
  on Ultrasonics, Ferroelectrics, and Frequency Control}, vol.~34, pp.
  647--654, Nov. 1987.

\bibitem{jaswinski}
A.~Jazwinski, \emph{Stochastic Processes and Filtering Theory}.\hskip 1em plus
  0.5em minus 0.4em\relax New York: Academic Press, 1970.

\bibitem{arvind}
A.~Giridhar and P.~R. Kumar, ``{Distributed Clock Synchronization over Wireless
  Networks: Algorithms and Analysis},'' in \emph{Proceedings of the 45th IEEE
  Conference on Decision and Control}, San Diego, Dec. 2006, pp. 4915--4920.

\bibitem{barooah_conf}
P.~Barooah and J.~Hespanha, ``{Distributed Estimation from Relative
  Measurements in Sensor Networks},'' in \emph{Proceedings of the 2nd
  International Conference on Intelligent Sensing and Information Processing},
  Dec. 2005, pp. 226--231.

\bibitem{sn_survey2}
B.~Sundararaman, U.~Buy, and A.~Kshemkalyani, ``{Clock Synchronization for
  Wireless Sensor Networks: a survey},'' \emph{Ad Hoc Networks}, vol.~3, pp.
  281--323, May 2005.

\bibitem{sync_overview}
B.~Sadler and A.~Swami, ``{Synchronization in Sensor Networks: an overview},''
  in \emph{Military Communications Conference (MILCOM)}.\hskip 1em plus 0.5em
  minus 0.4em\relax IEEE, Oct. 2006, pp. 1--6.

\bibitem{NTP}
D.~L. Mills, ``{Internet Time Synchronization: the Network Time Protocol},''
  \emph{IEEE Transactions on Communications}, vol.~39, no.~10, pp. 1482--1493,
  Oct. 1991.

\bibitem{RBS}
J.~Elson, L.~Girod, and D.~Estrin, ``{Fine-Grained Network Time Synchronization
  using Reference Broadcasts},'' in \emph{Proceedings of the 5th Symposium on
  Operating Systems Design and Implementation}, Boston, MA, Dec. 2002, pp.
  147--163.

\bibitem{FTSP}
M.~Maroti, G.~Simon, B.~Kusy, and A.~Ledeczi, ``{The Flooding Time
  Synchronization Protocol},'' in \emph{Proceedings of the 2nd International
  Conference on Embedded Networked Sensor Systems}, Baltimore, Nov. 2004, pp.
  39--49.

\bibitem{karp}
J.~Elson, R.~Karp, C.~H. Papadimitriou, and S.~Shenker, ``{Global
  Synchronization in Sensornets},'' in \emph{Proceedings of LATIN}.\hskip 1em
  plus 0.5em minus 0.4em\relax Buenos Aires: Springer, 2004, pp. 609--624.

\bibitem{solis}
R.~Solis, V.~Borkar, and P.~R. Kumar, ``{A New Distributed Time Synchronization
  Protocol for Multihop Wireless Networks},'' in \emph{Proceedings of the 45th
  IEEE Conference on Decision and Control}, San Diego, Dec. 2006, pp.
  2734--2739.

\bibitem{smoothing}
A.~Giridhar and P.~R. Kumar, ``{The Spatial Smoothing Method of Clock
  Synchronization in Wireless Networks},'' \emph{Theoretical Aspects of
  Distributed Computing in Sensor Networks}, pp. 227--256, 2011.

\bibitem{barooah}
P.~Barooah, N.~da~Silva, and J.~Hespanha, ``{Distributed Optimal Estimation
  from Relative Measurements for Localization and Time Synchronization},''
  \emph{Distributed Computing in Sensor Systems, Lecture Notes in Computer
  Science}, vol. 4026, pp. 266--281, Jun. 2006.

\bibitem{barooah_scalinglaws}
P.~Barooah and J.~Hespanha, ``{Error Scaling Laws for Optimal Estimation from
  Relative Measurements},'' \emph{IEEE Transactions on Information Theory},
  vol.~55, no.~12, pp. 5661--5673, Dec. 2009.

\bibitem{barooah_asymmetric}
P.~Barooah, J.~Hespanha, and A.~Swami, ``{On the effect of Asymmetric
  Communication on Distributed Time Synchronization},'' in \emph{Proceedings of
  the 46th IEEE Conference on Decision and Control}, New Orleans, LA, Dec.
  2007, pp. 5465--5471.

\bibitem{barooah_MJLS}
C.~Liao and P.~Barooah, ``{Time Synchronization in Mobile Sensor Networks from
  Relative Measurements},'' in \emph{Proceedings of the 49th IEEE Conference on
  Decision and Control}, Atlanta, GA, Dec. 2010, pp. 2118--2123.

\bibitem{barooah_survey}
P.~Barooah and J.~Hespanha, ``{Estimation on Graphs from Relative
  Measurements},'' \emph{IEEE Control Systems Magazine}, vol.~27, no.~4, pp.
  57--74, Aug. 2007.

\bibitem{dpll}
O.~Simeone and U.~Spagnolini, ``Distributed time synchronization in wireless
  sensor networks with coupled discrete-time oscillators,'' \emph{EURASIP
  Journal on Wireless Communications and Networking}, vol. 2007, 2007, article
  ID: 57054, 13 pages.

\bibitem{MTS}
J.~He, P.~Cheng, L.~Shi, and J.~Chen, ``Time synchronization in {WSNs}: A
  maximum-value-based consensus approach,'' \emph{IEEE Transactions on
  Automatic Control}, vol.~PP, no.~99, pp. 1--1, 2013.

\bibitem{RKO}
N.~Freris and A.~Zouzias, ``{Fast distributed smoothing of relative
  measurements},'' in \emph{51st IEEE Conference on Decision and Control},
  Maui, Dec. 2013, pp. 1411--1416.

\bibitem{model_sde}
L.~Galleani, L.~Sacerdote, P.~Tavella, and C.~Zucca, ``A mathematical model for
  the atomic clock error,'' \emph{Metrologia}, vol.~40, pp. 257--264, 2003.

\bibitem{allan_sde}
J.~W. Chaffee, ``Relating the {Allan} variance to the diffusion coefficients of
  a linear stochastic differential equation model for precision oscillators,''
  \emph{IEEE Transactions on Ultrasonics, Ferroelectrics, and Frequency
  Control}, vol.~34, pp. 655--658, Nov. 1987.

\bibitem{cont_mart}
D.~Revuz and M.~Yor, \emph{Continuous Martingales and Brownian Motion}.\hskip
  1em plus 0.5em minus 0.4em\relax Springer, 2001.

\bibitem{numerical}
P.~Kloeden and E.~Platen, \emph{Numerical solution of stochastic differential
  equations}.\hskip 1em plus 0.5em minus 0.4em\relax Springer-Verlag, 2009.

\bibitem{stroock}
D.~Stroock, \emph{Probability Theory: An Analytic View}.\hskip 1em plus 0.5em
  minus 0.4em\relax Cambridge, MA: Cambridge University Press, 1993.

\bibitem{Rand_Kaczmarz}
A.~Zouzias and N.~Freris, ``{Randomized Extended Kaczmarz for solving Least
  Squares},'' \emph{SIAM Journal on Matrix Analysis and Applications}, vol.~34,
  no.~2, pp. 773--793, 2013.

\bibitem{graph}
D.~West, \emph{Introduction to Graph Theory}.\hskip 1em plus 0.5em minus
  0.4em\relax Englewood Cliffs, NJ: Prentice Hall, 1996.

\bibitem{borkar}
V.~Borkar, \emph{Stochastic Approximation: A Dynamical System Viewpoint}.\hskip
  1em plus 0.5em minus 0.4em\relax New Delhi and Cambridge, UK: Hindustan
  Publishing Agency and Cambridge University Press, 2008.

\bibitem{scott}
S.~Graham and P.~R. Kumar, ``{Time in General-purpose Control Systems: The
  Control Time Protocol and an experimental evaluation},'' in \emph{Proceedings
  of the 43rd IEEE Conference on Decision and Control}, Bahamas, Dec. 2004, pp.
  4004--4009.

\bibitem{royden}
H.~Royden, \emph{Real Analysis}.\hskip 1em plus 0.5em minus 0.4em\relax
  Stanford, CA: Prentice Hall, 1988.

\end{thebibliography}

\appendices
\renewcommand{\theequation}{A-\arabic{equation}}
\section{Proof of Lemma \ref{prop_lemma}}\label{prop_app}
By applying It\^{o}'s formula~\cite{jaswinski} to the function $f(X_i(t),t) := X_i(t)e^{\alpha_i t}$
we have
\begin{equation}
df(X_i(t),t) = \ep_ie^{\alpha_it}dW_i(t).
\end{equation}
Hence, the solution of (\ref{OU}) satisfies
\begin{equation}\label{sol}
X_i(t) = \ep_i\int_0^t \! \! e^{-\alpha_i(t - s)}dW_i(s),
\end{equation}
which, in turn, shows that $X_i(t)$ is a zero-mean Gaussian random variable. The variance can be computed by use of
It\^{o}'s isometry~\cite{jaswinski}:
\begin{equation}\label{cov_X}
\E[X_i(t)^2] = \ep_i^2\int_0^te^{-2\alpha_i(t - s)}ds =
\frac{\ep_i^2}{2\alpha_i}(1 - e^{-2\alpha_it}).
\end{equation}
This  further implies that
\begin{eqnarray}
\E[a_i(t)] &=& \E[e^{X_i(t)}]e^{-\frac{1}{4}\frac{\ep_i^2}{\alpha_i}(1 - e^{-2\alpha_it})} = 1,\\
Var(a_i(t)) &=& c_i^2(t)\E[e^{2X_i(t)}]-1\nonumber\\
            &=& e^{\frac{\ep_i^2}{2\alpha_i}(1-e^{-2\alpha_it})} - 1 \nearrow \ e^{\frac{\ep_i^2}{2\alpha_i}} - 1.\label{varskew_bound}
\end{eqnarray}
By Tonelli's theorem~\cite{royden}
\begin{equation}
\E[\tau_i(t)] = \int_0^t\!\!\E[a_i(t')]dt' = t.
\end{equation}
It follows from (\ref{sol}) and Tonelli's theorem that
\begin{equation}
Var(\tau_i(t)) = \int_0^t \! \! \int_0^t\!\!\E[a_i(r)a_i(s)]ds\,dr - t^2.
\end{equation}
Note that $\E[a_i(r)a_i(s)] = c_i(r,s)\E[e^{X_i(r)+X_i(s)}]$, where $c_i(r,s) :=e^{-\frac{\ep_i^2}{4\alpha_i}(2-e^{-2\alpha_ir}-e^{-2\alpha_is})}$ and $X_i(r) + X_i(s) \thicksim \mathcal{N}(0,\sigma_i^2(r,s))$ with $\sigma_i^2(r,s) := -2\log c_i(r,s) + 2\E[X_i(r)X_i(s)]$. From (\ref{sol}) and It\^{o}'s isometry we have 
\begin{eqnarray}\label{cor_X}
\E[X_i(r)X_i(s)] &=& \ep_i^2e^{-\alpha_i(r+s)}\E[\int_0^r \! \! e^{\alpha_it} dW_i(t)\ \ \int_0^s \! \! e^{\alpha_it}
dW_i(t)]\\
 &=& \ep_i^2e^{-\alpha_i(r+s)}\int_0^{s\wedge r} \! \! e^{2\alpha_it}dt =
\frac{\ep_i^2}{2\alpha_i}e^{-\alpha_i(r+s)}(e^{2\alpha_i s\wedge r} - 1),
\end{eqnarray}
where $a\wedge b := min(a,b)$. Therefore
\begin{equation}\label{skew_cor}
\E[a_i(r)a_i(s)] = e^{\frac{\ep_i^2}{2\alpha_i}e^{-\alpha_i(r+s)}(e^{2\alpha_i s\wedge r} - 1)},
\end{equation}
which, in turn, yields
\begin{eqnarray}
& Var(\tau_i(t)) =  2\int_0^t \! \! \int_0^r\!\!e^{f(s,r)}ds\,dr - t^2, \label{var_bound}\\
\mbox{where } & f(s,r):= 
\frac{\ep_i^2}{2\alpha_i}e^{-\alpha_ir}(e^{\alpha_is} - e^{-\alpha_is})\label{fsr}.
\end{eqnarray}
Evaluating the integral (\ref{var_bound}) analytically is not possible, so we study the asymptotic behavior of the
variance. For $r\le s$, $f(s,r)$ is strictly decreasing in $r$, and strictly increasing in $s$,  therefore $f(s,r) \le
f(s,s)$. Also note that $g(s):=f(s,s)$ is strictly increasing.
This implies that
\begin{equation}\label{vartime_bound}
Var(\tau_i(t)) < (e^\frac{\ep_i^2}{2\alpha_i}-1)t^2 = O(t^2).
\end{equation}
To obtain a lower bound on $Var(\tau_i(t))$ we use Jensen's inequality~\cite{royden} to get 
\begin{eqnarray}\label{var_lb}
Var(\tau_i(t)) &\ge& (A - 1)t^2,\\
A &:=& 
e^{\frac{2}{t^2}\int_0^t \! \! \int_0^r\!\!{f(s,r)}ds\,dr} = e^{h(t)},\nonumber\\
h(t) &:=& \frac{1}{t^2} \frac{\ep_i^2}{\alpha_i^2}  (t + \frac{1}{2\alpha_i}(1-e^{-2\alpha_it}) - \frac{2}{a_i}(1-e^{-\alpha_it})) = \Omega(1/t).\nonumber
\end{eqnarray}
The rest follows from $(e^{\Omega(\frac{1}{t})} - 1)t^2 = \Omega(t)$.
\hfill$\blacksquare$
\renewcommand{\theequation}{B-\arabic{equation}}
\section{Proof of Theorem~\ref{meas_thm}}\label{model_just_app}
\begin{defn}
For stochastic variables, $y(t),x(t)$ depending on a parameter $t$, whenever we write $y(t) = x(t) + O(t^k)$, for some
real number $k$, this is interpreted 
in the $L^2-$norm, 
that is to say there exists a $c>0$, such that $\E[(y(t) - x(t))^2] \le c t^{2k}$, for all $t \in
(0,1)$, in particular $\overline{\lim}_{t\to 0^+} \frac{\E[(y(t) - x(t))^2]^\frac{1}{2}}{t^{k}}
< \infty $. The extensions for $\Theta(\cdot),\Omega(\cdot),o(\cdot)$ and for the vector case are straightforward.
\end{defn}
An application of Cauchy-Schwartz inequality~\cite{royden} yields that $O(t^k)O(t^l)
= O(t^{k+l})$, while Minkowski's inequality~\cite{royden} yields $O(t^k) + O(t^l) = O(t^{k\wedge l})$.
The following result shows how to perform a first-order stochastic Taylor expansion for the processes under study, namely, skews and time displays.

\begin{lem}[Stochastic Taylor's expansion] \label{lemma_appr}
For $\delta t_k := t_{k+1} - t_k $, and  $d_k$ a positive random variable independent of $W_i$:
\begin{eqnarray}
\tau_i(t_{k+1}) &=& \tau_i(t_k) + a_i(t_k)\delta t_k + O(\delta t_k^\frac{3}{2}), \label{taylor1}\\
\tau_i(t_k + d_k) &=& \tau_i(t_k) + a_i(t_k)d_k + O(\E[d_k^3]^\frac{1}{2}),\label{taylor2}\\
a_i(t_{k+1}) &=& a_i(t_k) + O(\delta t_k^\frac{1}{2}).\label{taylor3}
\end{eqnarray}
\end{lem}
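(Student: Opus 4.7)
The plan is to prove (\ref{taylor3}) first and then to bootstrap it into (\ref{taylor1}) and (\ref{taylor2}). Throughout I will lean on Corollary~\ref{bounded}, which gives $L^p$-boundedness of $a_i(t)$, $1/a_i(t)$ and $X_i(t)$ uniformly in $t$; these uniform moment bounds are what allow pathwise SDE estimates to be lifted to $L^2$ bounds with constants independent of $t_k$.

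For (\ref{taylor3}) I would integrate the skew SDE of Remark~\ref{skew_sde} from $t_k$ to $t_{k+1}$. The drift is a product of factors that are $L^p$-bounded uniformly in $t$ (note that $\log a_i$ differs from $X_i$ by a bounded deterministic term, and $X_i$ is Gaussian with uniformly bounded variance by (\ref{cov_X})), so Minkowski's integral inequality yields a drift contribution of $O(\delta t_k)$ in $L^2$. The It\^o term, by It\^o's isometry, has second moment $\ep_i^2\!\int_{t_k}^{t_{k+1}}\!\E[a_i(s)^2]\,ds=O(\delta t_k)$, hence is $O(\delta t_k^{1/2})$ in $L^2$, dominating the drift. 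Combining via the $L^2$ triangle inequality gives (\ref{taylor3}).

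For (\ref{taylor1}) I would rewrite the remainder as $\int_{t_k}^{t_{k+1}}(a_i(s)-a_i(t_k))\,ds$ and apply Cauchy--Schwartz inside the Lebesgue integral. Taking expectations and plugging in the $\E[(a_i(s)-a_i(t_k))^2]=O(s-t_k)$ bound from (\ref{taylor3}) yields a second moment of order $\delta t_k^3$, i.e.\ an $O(\delta t_k^{3/2})$ error in $L^2$. For (\ref{taylor2}) the identity and Cauchy--Schwartz step are identical, but the upper limit $t_k+d_k$ is random; I would condition on $d_k$ and use the hypothesized independence of $d_k$ from $W_i$ (and hence from $a_i$) so that the moment estimate of (\ref{taylor3}) still applies pointwise in $s$ given $d_k$. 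Integrating $C(s-t_k)$ over $[t_k,t_k+d_k]$ and then taking expectation over $d_k$ produces a second moment of order $\E[d_k^3]$, which gives the claimed $O(\E[d_k^3]^{1/2})$ bound.

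The main obstacle is (\ref{taylor2}): without the independence of $d_k$ and $W_i$ one cannot pull the conditional expectation inside the $s$-integral, and if $d_k$ depended on the future of $W_i$ the pointwise moment estimate would fail. A subsidiary technical point is that the drift in the skew SDE contains $\log a_i(t)$, which is unbounded pathwise; for the drift estimate to work one must rely on $L^p$-boundedness via Gaussian moments of $X_i$ (Corollary~\ref{bounded}) rather than any almost-sure bound.
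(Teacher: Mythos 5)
Your proof is correct, and it reaches the same estimates as the paper but by a noticeably different route. The paper proves (\ref{taylor1}) directly: it substitutes the integral form of the skew SDE into $\int_{t_k}^{t_{k+1}}a_i(t)\,dt$, obtaining an iterated decomposition into a double Lebesgue integral $A=\int_{t_k}^{t_{k+1}}\!\int_{t_k}^{t'}a(a_i(s),s)\,ds\,dt'$ and a double stochastic integral $B=\int_{t_k}^{t_{k+1}}\!\int_{t_k}^{t'}\ep_i a_i(s)\,dW_i(s)\,dt'$, and bounds these separately ($A=O(\delta t_k^2)$ via $L^p$-boundedness and Tonelli, $B=O(\delta t_k^{3/2})$ via It\^{o}'s isometry and Fubini for the iterated stochastic integral); (\ref{taylor3}) is relegated to a remark that it ``follows along the same lines.'' You instead prove (\ref{taylor3}) first and then bootstrap: writing the remainder in (\ref{taylor1}) as $\int_{t_k}^{t_{k+1}}(a_i(s)-a_i(t_k))\,ds$ and combining Cauchy--Schwarz with the uniform increment bound $\E[(a_i(s)-a_i(t_k))^2]=O(s-t_k)$ gives the same $O(\delta t_k^{3})$ second moment. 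Your route is more modular and sidesteps the Fubini-type interchange needed to compute $\E[B^2]$, while the paper's decomposition makes visible that the drift contributes only $O(\delta t_k^2)$ and that the $3/2$ exponent is entirely due to the martingale part (useful if one wanted a higher-order expansion); both yield the same final rate because the martingale term dominates. Your treatment of (\ref{taylor2}) --- conditioning on $d_k$ and invoking its independence from $W_i$ so that the increment estimate applies pointwise under the conditional expectation --- is exactly what the paper does, and your observation that the drift involves $\log a_i$ and must be controlled through Gaussian moments of $X_i$ rather than a pathwise bound matches the paper's bound $|a(a_i(t),t)|\le C_1(a_i(t)+a_i^2(t))$ in spirit.
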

\begin{proof}
Let $a(a_i(t),t):= (-\alpha_i \log a_i(t) + \frac{3\ep_i^2}{4}(1-e^{-2\alpha_i t}))a_i(t)$. From (\ref{disp_eq}), (\ref{sde_a}) we have
%
\begin{eqnarray}\label{taylor}
\tau_i(t_{k+1}) &=& \tau_i(t_k) + \int_{t_k}^{t_{k+1}} a_i(t)dt\nonumber\\
&=& \tau_i(t_k) + a_i(t_k)\delta t_k + \int_{t_k}^{t_{k+1}} \int_{t_k}^{t'}
a(a_i(s),s) ds dt' + \int_{t_k}^{t_{k+1}}\int_{t_k}^{t'} \ep_ia_i(s)dW_i(s)dt
\end{eqnarray}
Define $A := \int_{t_k}^{t_{k+1}} \! \! \int_{t_k}^{t'} a(a_i(s),s) ds dt', B := \int_{t_k}^{t_{k+1}} \!
\! \int_{t_k}^{t'} \ep_ia_i(s)dW_i(s)dt'$. It is easy to verify that there exits $C_1>0$ s.t. $|a(a_i(t),t)| \le C_1(a_i(t) + a_i^2(t))$. By the fact that $a_i(t)$ is $L^p-$ bounded for all $p \ge 1$ (cf. Lemma \ref{bounded}) 
and Tonelli's theorem~\cite{royden}, $A = O(\delta t_k^2)$. 
By It\^{o}'s isometry~\cite{jaswinski} and Fubini's theorem~\cite{royden}, we get for some
constant $C>0$
\begin{eqnarray}
\E[B^2] &=& \ep_i^2 \int_{t_k}^{t_{k+1}}\int_{t_k}^{t_{k+1}} \E[\int_{t_k}^{t'} \int_{t_k}^{t''}
a_i(s)a_i(r)dW_i(s)dW_i(r)]dt'dt''\nonumber\\
&\le& C \int_{t_k}^{t_{k+1}}\int_{t_k}^{t_{k+1}} t'\wedge t'' dt'dt''\nonumber\\
&=& O(\delta t_k^3).
\end{eqnarray}
Using the fact that $O(\delta t_k^2) + O(\delta t_k^\frac{3}{2}) = O(\delta t_k^\frac{3}{2})$
for small values of $\delta t_k$ establishes (\ref{taylor1}).
Applying the same analysis to the conditional expectation given $d_k$ yields $\E[(\tau_i(t_k + d_k) - \tau_i(t_k) -
a_i(t_k)d_k)^2 |d_k]  = O(d_k^3)$. Taking expectations in both sides establishes (\ref{taylor2}).
The proof of
(\ref{taylor3}) follows along the same lines using (\ref{sde_a}), It\^{o}'s isometry and the $L^p-$boundedness of
$a_i(t)$.
\end{proof}

From (\ref{disp_eq}), we have the following approximation
\begin{eqnarray}
r_{i,j}^{(k)} &=& \tau_j(t_k) + a_j(t_k)d_{ij}^{(k)} + e^{(r)}_k \label{meas1},\\
r_{i,j}^{(k+1)} &=& \tau_j(t_{k+1}) \!+\! a_j(t_{k+1})d_{ij}^{(k+1)} \!\! + \!\! e^{(r)}_{k+1}, \label{meas2}\\
a_j(t_{k+1})d_{ij}^{(k+1)} &=& a_j(t_k)d_{ij}^{(k+1)} + e^{(a)}_k\label{skew_delay},\\
s_i^{(k+1)} - s_i^{(k)} &=& a_i(t_k)\delta t_k + e^{(s)}_k,\label{eqn1}\\
\tau_j(t_{k+1}) - \tau_j(t_k) &=& a_j(t_k)\delta t_k + e^{(j)}_k\label{eqn2}.
\end{eqnarray}
From the previous lemma the error terms satisfy $e^{(r)}_k = O(d_{ij}^{(k)^\frac{3}{2}}), \  
e^{(a)}_k = O(d_{ij}^{(k+1)}) O(\delta t_k^\frac{1}{2}), \ 
e^{(s)}_k = O(\delta t_k^\frac{3}{2}), \ 
e^{(j)}_k = O(\delta t_k^\frac{3}{2})$. 
%
%
We adopt the following assumptions on the communication delays:


\begin{ass}[Assumption on delays]\label{delay_ass} 
The delays  $\{d_{ij}^{(k)}\}$ are positive random variables upper bounded by $D$, independent across time instants $k$,
and links $(i,j)$, and identically distributed across time instants $k$ for a fixed link $(i,j)$, as well as independent of
 $\{W_i(t)\}$.
\end{ass}
%
Subtracting (\ref{meas1}) from (\ref{meas2}) and using (\ref{skew_delay}), (\ref{eqn2}) we
get
\begin{equation}\label{rec_dif_taylor}
r_{i,j}^{(k+1)} - r_{i,j}^{(k)} = a_j(t_k)\delta t_k + a_j(t_k)(d_{ij}^{(k+1)}-d_{ij}^{(k)}) + O(\delta t_k^\frac{3}{2})
+ O(d_{ij}^{(k)^\frac{3}{2}}) + O(d_{ij}^{(k+1)^\frac{3}{2}}) + O(d_{ij}^{(k+1)}) O(\delta t_k^\frac{1}{2}) .
\end{equation}
The goal is to express the Taylor expansion for the fraction $\frac{r_{i,j}^{(k+1)} - r_{i,j}^{(k)}}{s_i^{(k+1)} - s_i^{(k)}}$, i.e., 
divide (\ref{rec_dif_taylor}) by (\ref{eqn1}). 

\begin{equation}
\frac{1}{s_i^{(k+1)} - s_i^{(k)}} = \frac{1}{(t_{k+1}-t_k)}\frac{1}{\frac{1}{(t_{k+1}-t_k)}\int_{t_k}^{t_{k+1}}a_i(t)
dt} \le
\frac{1}{(t_{k+1}-t_k)^2} \int_{t_k}^{t_{k+1}} \frac{1}{a_i(t)}dt,
\end{equation}
where we have used Jensen's inequality. In particular, it follows that for every $m$, $\delta s_i^{(k)^m} := s_i^{(k+1)} - s_i^{(k)} = \Theta(\delta t_k^m)$; 
the sender can directly control $\delta s_i^{(k)}$. 
By the Taylor expansion of $g(x) = \frac{1}{c+x},\ c>0$ around $x=0$ and (\ref{eqn1}):
\begin{equation}
\frac{1}{s_i^{(k+1)} - s_i^{(k)}} = \frac{1}{a_i(t_k)\delta t_k} + \frac{O(\delta t_k^{\frac{3}{2}})}{\Theta(\delta
s_i^{(k)^2})} = \frac{1}{\delta t_k}\frac{1}{a_i(t_k)} + O(\delta t_k^{-\frac{1}{2}}),
\end{equation}
whence
\begin{eqnarray}
\frac{r_{i,j}^{(k+1)} - r_{i,j}^{(k)}}{s_i^{(k+1)} - s_i^{(k)}} &=& \frac{a_j(t_k)}{a_i(t_k)}(1 + \frac{d_{ij}^{(k+1)} -
d_{ij}^{(k)}}{\delta t_k}) + e_k,\\
e_k &=& O(\delta t_k^\frac{1}{2}) + O(d_{ij}^{(k+1)}) O(\delta t_k^{-\frac{1}{2}}) +
[O(d_{ij}^{(k)^\frac{3}{2}}) + O(d_{ij}^{(k+1)^\frac{3}{2}})]O(\delta t_k^{-1}),
\end{eqnarray}
or, equivalently, (by the first-order Taylor expansion of $\log (c+x), \ c>0$)
\begin{eqnarray} \label{error_terms}
\log|\frac{r_{i,j}^{(k+1)} - r_{i,j}^{(k)}}{s_i^{(k+1)} - s_i^{(k)}}| &=& - \frac{1}{4}(\frac{\ep_j^2}{\alpha} -
\frac{\ep_i^2}{\alpha})(1-e^{-2\alpha t_k}) + X_{ij}(t_k) + \log|1 + \frac{d_{ij}^{(k+1)} - d_{ij}^{(k)}}{t_{k+1} -
t_k}| + e_k',\\
e_k' &=& O(\delta t_k^\frac{1}{2}) + O(d_{ij}^{(k+1)}) O(\delta t_k^{-\frac{1}{2}}) +
[O(d_{ij}^{(k)^\frac{3}{2}}) + O(d_{ij}^{(k+1)^\frac{3}{2}})]O(\delta t_k^{-1}),
\end{eqnarray}

The quantity $y_{ij}(t_k) := \log|\frac{r_{i,j}^{(k+1)} - r_{i,j}^{(k)}}{s_i^{(k+1)} - s_i^{(k)}}|$ can be obtained from measurements. Note that because of
delay variation, it might be that $r_{i,j}^{(k+1)} < r_{i,j}^{(k)}$, even though $s_i^{(k+1)} > s_i^{(k)}$, so the use
of absolute value in the definition of $y_{ij}(t_k)$ is indispensable; this corresponds to out-of order delivery of packets in wireless. 
The quantity $v_{ij}(t_k) := \log|1 +
\frac{d_{ij}^{(k+1)} - d_{ij}^{(k)}}{t_{k+1} - t_k}| +  e_k'$ 
is random, and depends on the rate of delay variations and the skew variations during the send times of the two
consecutive packets. Hence we get
\begin{equation}
y_{ij}(t_k) := X_{ij}(t_k) + v_{ij}(t_k).
\end{equation}

It is well known~\cite{jaswinski}, that linear filtering is the only computationally tractable estimation scheme for the
state equation (\ref{state_eq}). Hence, we need to model $v_{ij}(t_k)$ as white Gaussian noise, i.e., $v_{ij}(t_k) \sim
\mathcal{N}(0,\sigma^2_{ij}(t_k))$. 
Note that the random variables $\{e_k'\}$ are not independent across time, because $a_{ij}(t), X_{ij}(t)$ are not
independent increment processes. However, if we further make the assumption that link delays can be made
small (for a practical scheme to reduce delays via proper time-stamping see~\cite{FTSP}), in
particular assuming that $d_{ij}^{(k)^m} = O (\delta t_k^m), m=1,\frac{3}{2}$
implies that $e_k' = O(\delta t_k^\frac{1}{2})$. Therefore we ignore the correlation between $e_k'$ for different values of $k$, as it can 
be controlled by $\delta t_k$. 
%
By the assumption of small delays, it is reasonable to assume that the variance sequence is uniformly upper bounded and the variance 
$\sigma^2_{ij}(t_k)$ is an increasing function of $\delta t_k$, of the known quantity $\delta s_i^{(k)}$. 
%
\hfill$\blacksquare$

\end{document}